\documentclass[12pt,reqno]{amsart}
\textheight 23truecm \textwidth 17truecm
\setlength{\topmargin}{-1 cm}
\setlength{\footskip}{1 cm}
\setlength{\oddsidemargin}{-0.5 cm}
\setlength{\evensidemargin}{-0.5cm}
\pagestyle{plain}
\allowdisplaybreaks[4]

\usepackage [latin1]{inputenc}
\usepackage{graphicx}
\usepackage{amssymb}
\usepackage{color}
\usepackage{epstopdf}
\usepackage{amsthm,amsmath,amssymb}
\usepackage{mathrsfs}
\usepackage{cite}
\usepackage{subeqnarray}
\usepackage{cases}

\newtheorem{proposition}{Proposition}[section]
\newtheorem{corollary}{Corollary}[section]

\newtheorem{thm}{Theorem}[section]

\begin{document}

\begin{center}
{\large \sc \bf On the discrete modified KP hierarchy: tau functions, Fay identity and squared eigenfunction symmetries }

\vskip 20pt

{\large Kelei Tian, Guangmiao Lai, Ge Yi* and Ying Xu}

\vskip 20pt

{\it
School of Mathematics, Hefei University of Technology, Hefei 230601, China
 }

\bigskip

$^*$ Corresponding author: ge.yi@hfut.edu.cn
\bigskip

{\today}

\end{center}

\bigskip
\bigskip
\textbf{Abstract:} In this paper, we prove the existence of tau functions of the discrete modified KP hierarchy and define the squared eigenfunction symmetry. Meanwhile, the Fay identity with its difference form, the squared eigenfunction potentials and the symmetry flow acting on tau functions are obtained.

\bigskip

\textit{\textbf{Keywords:}} discrete modified KP hierarchy, tau functions, Fay identity, squared eigenfunction symmetries
\bigskip
\bigskip

\section{\sc \bf Introduction}

The study of discrete systems and their integrability has made great progress in the last decades, and has promoted the development of new mathematical tools such as discrete complex analysis and discrete differential geometry, which provides an effective way for the current study of difference equations and the general theory of discrete systems\cite{CaseA,AblowitzA,SakaiR,NijhoffL}. More recently, with the gradual formation of the method of discrete integrable systems, the related research on difference operator theory and complex analysis has been greatly developed\cite{ZhangH,KonR,ZhangD}. It has successively boosted the development of the extended discrete integrable systems, solid-state physics and crystal chemistry\cite{MogilnerH,TakasakiO,YaoA}.

Hirota pioneered the discretization of bilinear equations and obtained a series of discrete integrable systems, thus facilitating the discretization process of Sato's theory\cite{HirotaN,DateM}. The KP(Kadomtsev-Petviashvili) hierarchy and modified KP hierarchy as one of the key topics of integrable system\cite{DickeyS,DateN,KashiT,JimboS}, we also concern about the discretization of these hierarchies. Based on the difference operator $\Delta$ and the shift operator $\varGamma$, the tau function of the discrete hierarchy is described by making an appropriate shift in tau function, namely replacing $\tau(t_1,t_2,t_3,...)$ by
\begin{align*}
\tau_{n}(n;t_1,t_2,t_3,...)=\tau(t_1+n,t_2-\frac{n}{2},t_3+\frac{n}{3},...),n\in \mathbb{Z}.
\end{align*}
This sequence of tau functions \{$\tau_{n}$\} corresponds to the Segal-Wilson Grassmannian\cite{HaineC}. The Hamiltonian structures, the squared eigenfunction symmetries, the additional symmetries and the gauge transformation operators for the discrete KP hierarchy have been studied\cite{KuperD,LiuT,LiuS}. Furthermore, the extended discrete KP hierarchy, the algebraic structure of the discrete KP hierarchy  and the Virasoro type algebraic structure of the constrained discrete KP hierarchy have also been obtained\cite{YaoA,SunS,LiM2013}.

Compared to the well-established theory of the discrete KP hierarchy, the dmKP(discrete modified KP) case has not quite developed so far. Under the structure of Sato theory, in this paper we give the description of the dmKP hierarchy and its integrable properties. With the definitions of the Lax equation, dressing operator and wave function, we investigate the existence of tau functions, the Fay identity and its difference form, the spectral representation and the squared eigenfunction symmetry.

The organization of the paper is as follows. In section 2, we give a brief description of the dmKP hierarchy and prove the existence of tau functions. In section 3, by introducing the basic properties and vertex operators and proving the Fay identity of the dmKP hierarchy with its difference form, we derive the spectral representation involving the product of the eigenfunctions and the adjoint eigenfunctions. In section 4, the squared eigenfunction symmetry flow acting on the two tau functions are obtained. At last, some conclusions and discussions are presented in section 5.

\bigskip

\section{\sc \bf  Tau functions of the dmKP hierarchy}

Consider the algebra $G$ of the pseudo-difference operators\cite{HaineC,LiuS,LiG}
\begin{align*}
G=\{ \sum_{i\ll \infty} u_i(n)\Delta^{i} \},
\end{align*}
where $u_{i}(n)=u_{i}(n,t_1,t_2,t_3,...)$; $n\in \mathbb{Z},t_{i}\in \mathbb{R}$.
The shift operator acting on the function $g(n)$ is defined by
\begin{align*}
\varGamma g(n)=g(n+1).
\end{align*}
The difference operator $\Delta$ acting on the function is defined as
\begin{align*}
\Delta g(n)=(\varGamma -I)g(n)=g(n+1)-g(n).
\end{align*}
The algebraic multiplication of $\Delta^{k}$ with the multiplication operator $g$ is given by the Leibnitz
rule
\begin{align*}
\Delta^{k}\circ g=\sum_{i=0}^\infty \binom ki (\Delta^{i}g)(n+k-i)\Delta^{k-i},
\end{align*}
where $\binom ki$ is the ordinary combinatorics number. The action on the function $g(n)$ by the adjoint difference operator $\Delta^{*}$ is defined by
\begin{align*}
\Delta^{*} g(n)=(\varGamma^{-1} -I)g(n)=g(n)-g(n-1),
\end{align*}
where
\begin{align*}
\varGamma^{-1} g(n)=g(n-1).
\end{align*}
For the operators $\Delta$ and $\varGamma$, they satisfy
\begin{align*}
&\Delta\circ\varGamma=\varGamma\circ\Delta,\\
&\Delta^{*}=-\Delta\circ\varGamma^{-1},\\
&(\Delta^{-1})^{*}=(\Delta^{*})^{-1}=-\varGamma\circ\Delta^{-1}.
\end{align*}

The dmKP hierarchy in Kupershmidt-Kiso version is defined as the following equation
 \begin{align}\label{Laxeq}
\frac{\partial L}{\partial t_{i}}=[(L^{i})_{\geq1},L], n=1,2,3,\cdots
\end{align}
with the Lax operator $L\in G$ given below
\begin{align}
L(n)=\Delta+u_0(n)+u_1(n)\Delta^{-1}+u_2(n)\Delta^{-2}+\cdots.
\end{align}
In this paper, the symbols $( A)_{\geq1}$ and $( A)_{\leq0}$  denote $\sum\nolimits_{i=1}^{m} a_{i}\Delta^{i}$ and $\sum\nolimits_{i=-\infty}^{0} a_{i}\Delta^{i}$ respectively for arbitrary pseudo-differential operator
$A=\sum\nolimits_{i=-\infty}^{m} a_{i}\Delta^{i}$. Similar to the case of the discrete KP hierarchy, the Lax operator $L$ for the dmKP hierarchy can be
expressed in terms of the dressing operator $Z$,
\begin{align}
L=Z\Delta Z^{-1},\nonumber
\end{align}
where $Z$ is given by
\begin{align}
Z(n)=z_0+z_1\Delta^{-1}+z_2\Delta^{-2}+\cdots.
\end{align}
Unlike other hierarchies, here $u_0$ does not default to 0 and $z_0$ does not default to 1 in the dmKP case. The dressing operator $Z$ satisfies Sato equation
\begin{align}
\frac{\partial Z}{\partial t_{i}}=-\left(L^{i}\right)_{\leq0}Z.
\end{align}
The wave function $w(n,t,\lambda)$ and the adjoint wave function $w^{*}(n,t,\lambda)$ of the dmKP
hierarchy is defined in the following way
\begin{align}
w(n,t,\lambda)&=Z(n)(1+\lambda)^{n}e^{\xi(t,\lambda)},\label{w}\\
w^{*}(n,t,\lambda)&=(Z^{-1}(n-1)\Delta^{-1})^*(1+\lambda)^{-n}(e^{-\xi(t,\lambda)}),\label{wstar}
\end{align}
with
\begin{align}
\xi(t,\lambda)&=x\lambda+t_2\lambda^2+t_3\lambda^3+\cdots.
\end{align}
The eigenfunction $\phi$ and the adjoint eigenfunction $\psi$ are defined in the identities below
\begin{align}
\phi_{t_{i}}=&(L^{i})_{\geq1}(\phi),\\
\psi_{t_{i}}=&-(\Delta^{-1}(L^{i})^{*}_{\geq1}\Delta)(\psi).
\end{align}
Then $w(n,t,\lambda)$ and $w^{*}(n,t,\lambda)$ satisfy the bilinear identity
\begin{align}
\operatorname{res}_{\lambda}w(n,t^{'},\lambda)w^{*}(n,t,\lambda)=1,\label{bilinear}
\end{align}
which is equivalent to the modified KP hierarchy. Here $res_{\lambda}\sum\nolimits_{i}a_{i}\lambda^{i}=a_{-1}$.

The modified KP hierarchy can be viewed as the particular case of the coupled modified KP hierarchy. The
existence of two tau functions for the dmKP hierarchy is showed in the theorem below.

\begin{thm}
For the dmKP hierarchy, there exist two tau functions $\tau_0 $ and $\tau_1 $ such that the wave
and adjoint wave functions can be written as
\begin{align}
w(n,t,\lambda)&=\frac{\tau_0(n,t-[\lambda^{-1}])}{\tau_1(n,t)}(1+\lambda)^{n}e^{\xi(t,\lambda)},\label{tau0}\\
w^{*}(n,t,\lambda)&=\frac{\tau_1(n,t+[\lambda^{-1}])}{\tau_0(n,t)}\lambda^{-1}(1+\lambda)^{-n}e^{-\xi(t,\lambda)},\label{tau1}
\end{align}
where $[\lambda]=(\lambda,\lambda^2/2,\lambda^3/3,\cdots)$.
\end{thm}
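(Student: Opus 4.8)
The plan is to reduce both wave functions to scalar symbols, construct the two potentials from the Sato equation together with the bilinear identity, and then bootstrap the first-order data into the full vertex-operator formulas.

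First I would exploit that $(1+\lambda)^{n}$ is an eigenfunction of $\Delta$: since $\Delta(1+\lambda)^{n}=\lambda(1+\lambda)^{n}$, letting $Z=\sum_{j\ge0}z_j\Delta^{-j}$ act on $(1+\lambda)^{n}e^{\xi}$ replaces every $\Delta^{-1}$ by $\lambda^{-1}$, so \eqref{w} becomes $w=\hat Z(n,t,\lambda)(1+\lambda)^{n}e^{\xi}$ with the scalar symbol $\hat Z=z_0+z_1\lambda^{-1}+z_2\lambda^{-2}+\cdots$. The analogous computation with $\Delta^{*}(1+\lambda)^{-n}=\lambda(1+\lambda)^{-n}$ rewrites \eqref{wstar} as $w^{*}=\hat Z^{*}(n,t,\lambda)(1+\lambda)^{-n}e^{-\xi}$, where $\hat Z^{*}=\lambda^{-1}/z_0+\cdots$ begins at order $\lambda^{-1}$ because of the extra $\Delta^{-1}$ in its definition. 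The target identities \eqref{tau0}--\eqref{tau1} then amount to $\hat Z(n,t,\lambda)=\tau_0(n,t-[\lambda^{-1}])/\tau_1(n,t)$ and $\hat Z^{*}(n,t,\lambda)=\lambda^{-1}\tau_1(n,t+[\lambda^{-1}])/\tau_0(n,t)$; comparing leading terms as $\lambda\to\infty$ forces the normalization $z_0=\tau_0/\tau_1$, which is exactly the role played by the second tau function. Note also that in the product $w(n,t',\lambda)w^{*}(n,t,\lambda)$ the factors $(1+\lambda)^{\pm n}$ cancel, so at fixed $n$ the bilinear identity \eqref{bilinear} collapses to a purely continuous residue relation in the times $t$.

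Next I would construct the potentials. Writing the normalized symbol $\hat Z/z_0=1+(z_1/z_0)\lambda^{-1}+\cdots$, the Sato equation $\partial_{t_i}Z=-(L^{i})_{\le0}Z$ furnishes the $t_i$-flows of its coefficients; the key point is that the residues $r_i$, defined as the coefficient of $\Delta^{-1}$ in $L^{i}$, satisfy the closedness condition $\partial_{t_j}r_i=\partial_{t_i}r_j$, which follows from the zero-curvature relation $\partial_{t_i}(L^{j})_{\ge1}-\partial_{t_j}(L^{i})_{\ge1}=[(L^{j})_{\ge1},(L^{i})_{\ge1}]$ equivalent to \eqref{Laxeq}. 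This exactness lets me integrate and define $\tau_0(n,t)$ so that $\partial_{t_i}\log\tau_0$ equals the appropriate residue of $L^{i}$, after which I set $\tau_1:=\tau_0/z_0$. Granting the full vertex property $\hat Z/z_0=\tau_0(t-[\lambda^{-1}])/\tau_0(t)$, the relation $\hat Z=z_0\,\tau_0(t-[\lambda^{-1}])/\tau_0(t)=\tau_0(t-[\lambda^{-1}])/\tau_1(t)$ yields \eqref{tau0} at once.

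Finally I would verify \eqref{tau1} and upgrade the first-order matching to all orders simultaneously. The engine is the bilinear identity under the classical substitution $t'\mapsto t-[\mu^{-1}]$, which introduces the rational factor $e^{\xi(t-[\mu^{-1}],\lambda)-\xi(t,\lambda)}=1-\lambda/\mu$ and converts \eqref{bilinear} into functional (Fay-type) equations coupling $\tau_0$ and $\tau_1$ at shifted arguments; extracting residues then reproduces $\hat Z^{*}=\lambda^{-1}\tau_1(t+[\lambda^{-1}])/\tau_0(t)$ and confirms that one and the same pair $(\tau_0,\tau_1)$ governs both $w$ and $w^{*}$. The hard part will be precisely this step: showing that a potential pinned down by its first-order residue data actually encodes every higher coefficient of $\hat Z$ and $\hat Z^{*}$, i.e.\ that the single-function vertex shift reproduces the entire expansion. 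I expect to handle it by induction on the order in $\lambda^{-1}$, feeding the expansions of $\tau_0(t-[\lambda^{-1}])$ and $\tau_1(t+[\lambda^{-1}])$ into the shifted bilinear identity; the delicate bookkeeping is keeping the discrete shift in $n$ and the factor $(1+\lambda)^{\pm n}$ consistent with the continuous time-shifts while ensuring the two potentials stay coupled so that \eqref{tau0} and \eqref{tau1} hold with the very same $\tau_0$ and $\tau_1$.
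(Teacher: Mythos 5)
Your proposal contains a genuine gap at exactly the point you flag as ``the hard part.'' Everything hinges on the step you introduce with ``Granting the full vertex property $\hat Z/z_0=\tau_0(t-[\lambda^{-1}])/\tau_0(t)$'': this property \emph{is} the theorem. Knowing that $\partial_{t_i}\log\tau_0$ matches the residue data (first order in $\lambda^{-1}$) does not by itself produce a single function whose Taylor expansion under the shift $t\mapsto t-[\lambda^{-1}]$ reproduces \emph{every} coefficient $z_j/z_0$ of the normalized symbol; that upgrade is the entire content of the existence statement, and your plan defers it to an unexecuted ``induction on the order in $\lambda^{-1}$'' whose inductive step is never described. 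Moreover, in the discrete setting the closedness condition $\partial_{t_j}r_i=\partial_{t_i}r_j$ for residues of $L^i$ is not sufficient: the tau function must also be compatible with the shift $n\mapsto n+1$ (the relation $\tau(n+1,t)=\tau(n,t-[-1])$ used later in the paper), and your sketch acknowledges this bookkeeping problem without resolving it. There is also a small computational slip: $\Delta^{*}(1+\lambda)^{-n}=-\lambda(1+\lambda)^{-n}$, not $+\lambda(1+\lambda)^{-n}$, which affects the signs in your expansion of $\hat Z^{*}$.

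The paper avoids this entire difficulty with one structural observation you did not use: the Miura transformation $T=z_0^{-1}$. Conjugation gives $\partial_{t_i}(z_0^{-1}Lz_0)=[(z_0^{-1}L^{i}z_0)_{\geq1},\,z_0^{-1}L^{i}z_0]$, so $z_0^{-1}Lz_0$ is a Lax operator of the \emph{discrete KP} hierarchy, $z_0^{-1}Z$ is a discrete KP dressing operator, and $z_0^{-1}w$ is a discrete KP wave function. The existence of a tau function $\tau_0$ with the full vertex property is then a known theorem for discrete KP and can simply be cited; defining $\tau_1:=\tau_0/z_0$ yields \eqref{tau0} immediately (this matches your normalization $z_0=\tau_0/\tau_1$, which you identified correctly). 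For \eqref{tau1} the paper substitutes $t'=t+[z^{-1}]$ into the bilinear identity to get $\hat w(n,t+[z^{-1}],z)\,\hat w^{*}(n,t,z)=1$ and lets $z^{-1}\to0$ to obtain $z_0z_0^{*}=1$ --- close in spirit to your third step, but resting on the already-established \eqref{tau0} rather than on an unproven induction. If you want to salvage your direct Sato-theoretic construction, you would have to prove the vertex property from scratch (essentially redoing the discrete KP tau-function existence proof); the economical route is the Miura reduction.
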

\begin{proof}
Under the Miura transformation $T=z^{-1}_0$ in \cite{ShawM}, the Lax equation \eqref{Laxeq} becomes into
\begin{align}
\partial_{t_{i}}(z^{-1}_0Lz_0)=[(z^{-1}_0L^{i}z_0)_{\geq1},z^{-1}_0L^{i}z_0].
\end{align}
Therefore $z^{-1}_0Lz_0$ can be seen as the Lax operator of the discrete KP hierarchy, which means that $z^{-1}_0Z$ is one dressing operator of the discrete KP hierarchy. Thus $z^{-1}_0w(n,t,\lambda)$ can be seen as one wave function of the discrete KP hierarchy. Then there exists a tau function $\tau_0(n,t)$ such that
\begin{align}
z^{-1}_0w(n,t,\lambda)=\frac{\tau_0(n,t-[\lambda^{-1}])}{\tau_0(n,t)}(1+\lambda)^{n}e^{\xi(t,\lambda)}.
\end{align}
Defining the function $\tau_1(n,t)$ by
\begin{align}
\tau_1(n,t)=\frac{\tau_0(n,t)}{z_0(n,t)},
\end{align}
we can derive \eqref{tau0}.

Furthermore, by using another form of the wave function and adjoint wave function, we have
\begin{align}
w(n,t,\lambda)&=\hat w(n,t,\lambda)(1+\lambda)^{n}e^{\xi(t,\lambda)},\\
w^{*}(n,t,\lambda)&=\hat w^{*}(n,t,\lambda)\lambda^{-1}(1+\lambda)^{-n}e^{-\xi(t,\lambda)},
\end{align}
with
\begin{align}
\hat w(n,t,\lambda)&=z_0+z_1\lambda^{-1}+z_2\lambda^{-2}+\cdots,\label{what}\\
\hat w^{*}(n,t,\lambda)&=z_0^{*}+z_1^{*}\lambda^{-1}+z_2^{*}\lambda^{-2}+\cdots.
\end{align}
Then replacing $t^{'}_{i}$ by $t_{i}+z^{-i}/i$ and $t^{'}_{i}$ by $t_{i}$ in the bilinear identity \eqref{bilinear}, it can be obtained that
\begin{align}
1=&\operatorname{res}_{\lambda}\hat w(n,t+[z^{-1}],\lambda)\hat w^{*}(n,t,\lambda)\frac{z^{-1}}{1-\lambda z^{-1}}\notag\\
=&\hat w(n,t+[z^{-1}],z)\hat w^{*}(n,t,z).\label{bilinaerhat}
\end{align}
In the limit $z^{-1}\to 0$ we have
\begin{align*}
z_0(n,t)z^{*}_0(n,t)=1.
\end{align*}
Combining \eqref{tau0} and \eqref{bilinaerhat}, then \eqref{tau1} can be proved.
\end{proof}

\bigskip

\section{\sc \bf Fay identity and squared eigenfunction potential }
{
\setlength{\parindent}{0cm}
In this section, properties of the dmKP hierarchy with Fay identity, difference Fay identity and spectral representation are introduced, which help us to deduce the expression of squared eigenfunction potential. Firstly, with the bilinear identity relation between tau functions and wave function as well as adjoint wave function, the following theorem about Fay identity is derived.
}
\begin{thm}
(Fay identity) The tau functions of the dmKP hierarchy satisfy the following Fay identity
\begin{eqnarray}
\begin{aligned}
&s_1(s_0-s_1)(s_2-s_3)\tau_0(n,t+[s_2]+[s_3])\tau_1(n,t+[s_0]+[s_1])+cyclic(s_1,s_2,s_3) \\
=&(s_1-s_2)(s_2-s_3)(s_3-s_1)\tau_0(n,t+[s_0])\tau_1(n,t+[s_1]+[s_2]+[s_3]),
\end{aligned}
\end{eqnarray}
where $cyclic(s_1,s_2,s_3)$ is the cyclic permutation over $s_1$, $s_2$ and $s_3$.
\end{thm}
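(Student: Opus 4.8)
The plan is to obtain the Fay identity as a specialization of the bilinear identity \eqref{bilinear}, rewritten in tau-function form. Substituting \eqref{tau0} and \eqref{tau1} into \eqref{bilinear}, the factors $(1+\lambda)^{n}$ and $(1+\lambda)^{-n}$ cancel, the exponentials combine into $e^{\xi(t'-t,\lambda)}$, and after moving the $\lambda$-independent prefactors $\tau_1(n,t')^{-1}$ and $\tau_0(n,t)^{-1}$ to the other side one is left with the purely tau-theoretic residue relation
\begin{align*}
\operatorname{res}_{\lambda}\tau_0(n,t'-[\lambda^{-1}])\tau_1(n,t+[\lambda^{-1}])\lambda^{-1}e^{\xi(t'-t,\lambda)}=\tau_1(n,t')\tau_0(n,t).
\end{align*}
The entire identity will then be extracted from this single relation by one well-chosen specialization of the flow parameters $t'$ and $t$.

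First I would record the elementary identity $\xi([s],\lambda)=\sum_{i\geq1}\frac{s^{i}}{i}\lambda^{i}=-\log(1-s\lambda)$, so that $e^{\xi([s],\lambda)}=(1-s\lambda)^{-1}$. Writing $t$ for the base time of the theorem, I would then take the two independent time sequences in the relation above to be $t+[s_1]+[s_2]+[s_3]$ (in place of $t'$) and $t+[s_0]$ (in place of the second argument). The right-hand side becomes exactly $\tau_1(n,t+[s_1]+[s_2]+[s_3])\tau_0(n,t+[s_0])$, the desired right member of the Fay identity, while the difference of the two sequences is $[s_1]+[s_2]+[s_3]-[s_0]$, so the exponential collapses to the rational function
\begin{align*}
e^{\xi([s_1]+[s_2]+[s_3]-[s_0],\lambda)}=\frac{1-s_0\lambda}{(1-s_1\lambda)(1-s_2\lambda)(1-s_3\lambda)}.
\end{align*}

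To evaluate the residue I would view the two tau factors as a single power series $g(\lambda^{-1})$ in $\lambda^{-1}$, whose specialization at $\lambda^{-1}=s$ equals $\tau_0(n,t+[s_1]+[s_2]+[s_3]-[s])\tau_1(n,t+[s_0]+[s])$. The key device is the formal residue lemma
\begin{align*}
\operatorname{res}_{\lambda}\left(\lambda^{-1}\frac{g(\lambda^{-1})}{1-s\lambda}\right)=g(s),
\end{align*}
proved by expanding $(1-s\lambda)^{-1}=\sum_{k\geq0}s^{k}\lambda^{k}$ about $\lambda=0$ and reading off the coefficient of $\lambda^{-1}$. Decomposing the rational function into partial fractions $\sum_{i=1}^{3}A_i(1-s_i\lambda)^{-1}$ with $A_i=(s_i-s_0)s_i/\prod_{j\neq i}(s_i-s_j)$ and applying the lemma term by term produces the three-term cyclic sum $\sum_{i=1}^{3}A_i\,g(s_i)$. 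Multiplying through by $(s_1-s_2)(s_2-s_3)(s_3-s_1)$ and simplifying each coefficient (for instance the $i=1$ coefficient reduces to $s_1(s_0-s_1)(s_2-s_3)$) reproduces the stated identity, with the cyclic structure in $s_1,s_2,s_3$ arising automatically from the three partial-fraction poles.

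The step I expect to require the most care is the residue evaluation itself. The tau factors are power series in $\lambda^{-1}$, whereas the specialized exponential must be expanded as a power series in $\lambda$ about $\lambda=0$; interchanging the two expansion directions reverses signs and destroys the identity. Thus the genuine content lies in setting up the formal residue lemma with the correct bookkeeping of which variable is expanded around $0$ and which around $\infty$, after which the remaining partial-fraction and symmetrization algebra is routine.
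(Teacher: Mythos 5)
Your proposal is correct and follows essentially the same route as the paper: substitute the tau-function expressions \eqref{tau0}--\eqref{tau1} into the bilinear identity \eqref{bilinear}, specialize the times by Miwa shifts so that the exponential collapses to the rational function $(1-s_0\lambda)/\prod_{i=1}^{3}(1-s_i\lambda)$, then evaluate the residue via partial fractions together with the formal residue lemma $\operatorname{res}_{\lambda}\bigl(\lambda^{-1}g(\lambda^{-1})(1-s\lambda)^{-1}\bigr)=g(s)$, which the paper cites from the modified KP literature rather than proving. The only cosmetic differences are that the paper reaches the same specialization through an intermediate symmetric substitution $t\to t-y$, $t'\to t+y$ before setting $y=\frac{1}{2}([s_0]-[s_1]-[s_2]-[s_3])$, and writes the partial-fraction decomposition with the factor $(\lambda s_0-1)/\lambda^{3}$ pulled out front, neither of which changes the substance of the argument.
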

\begin{proof}
From \eqref{tau0} and \eqref{tau1}, the bilinear equation \eqref{bilinear} can be rewritten as
\begin{align}
\operatorname{res}_{\lambda}\left(\lambda^{-1}\tau_0(n,t-[\lambda^{-1}])\tau_1(n,t^{'}+[\lambda^{-1}])e^{\xi(t-t^{'},\lambda)}\right)=\tau_0(n,t^{'})\tau_1(n,t).
\end{align}
Replace $t$ as $t-y$ and $t^{'}$ as $t+y$, then
\begin{align}
res_{\lambda}\left(\lambda^{-1}\tau_0(n,t-y-[\lambda^{-1}])\tau_1(n,t+y+[\lambda^{-1}])e^{-2\xi(y,\lambda)}\right)=\tau_0(n,t+y)\tau_1(n,t-y).\label{taubi}
\end{align}
By letting $y\to \frac{1}{2}([s_0]-[s_1]-[s_2]-[s_3])$ and $t\to t+\frac{1}{2}([s_0]+[s_1]+[s_2]+[s_3])$, we can rewrite \eqref{taubi} as
\begin{align}
&\operatorname{res}_{\lambda}\left(\frac{1-\lambda s_0}{\lambda(1-\lambda s_1)(1-\lambda s_2)(1-\lambda s_3)}\tau_0(n,t+[s_1]+[s_2]+[s_3]-[\lambda^{-1}])\tau_1(n,t+[s_0]+[\lambda^{-1}])\right)\notag\\
=&\tau_0(n,t+[s_0])\tau_1(n,t+[s_1]+[s_2]+[s_3]).\label{frac}
\end{align}
Note that
\begin{align*}
&\frac{1-\lambda s_0}{\lambda(1-\lambda s_1)(1-\lambda s_2)(1-\lambda s_3)}\\
=&\frac{\lambda s_0-1}{\lambda^3(s_1-s_2)(s_2-s_3)(s_3-s_1)}\left(\frac{s_2-s_3}{1-\lambda s_1}+\frac{s_3-s_1}{1-\lambda s_2}+\frac{s_1-s_2}{1-\lambda s_3}\right),
\end{align*}
then \eqref{frac} becomes into
\begin{align*}
&\operatorname{res}_{\lambda}\frac{\lambda
s_0-1}{\lambda^3}\left(\frac{s_2-s_3}{1-\lambda s_1}+\frac{s_3-s_1}{1-\lambda s_2}+\frac{s_1-s_2}{1-\lambda s_3}\right)\\
&\times\tau_0(n,t+[s_1]+[s_2]+[s_3]-[\lambda^{-1}])\tau_1(n,t+[s_0]+[\lambda^{-1}])\\
=&(s_1-s_2)(s_2-s_3)(s_3-s_1)\tau_0(n,t+[s_0])\tau_1(n,t+[s_1]+[s_2]+[s_3]).
\end{align*}
With the identity \cite{ChengO} as
\begin{align}
\operatorname{res}_{z}\left(\sum_{i=-\infty}^\infty a_{i}(\zeta)z^{-i}\frac{1}{1-z/\zeta}\right)=\zeta\left(\sum_{i=-\infty}^\infty a_{i}(\zeta)z^{-i}\right)\mid_{z=\zeta},
\end{align}
the Fay identity of the dmKP hierarchy can be derived.
\end{proof}
Further, set $s_0=s_2=0$ and shift the time variables by $[s_1]+[s_3]$, we can get
\begin{align*}
&\tau_0(n,t+[s_1])\tau_1(n,t-[s_3])+s_3 \tau_0(n,t-[s_3])\tau_1(n,t+[s_1])\\
 =&(s_3 +1)\tau_0(n,t+[s_1]-[s_3])+\tau_1(n,t).
\end{align*}
Divide by $\tau_1(n,t)\tau_1(n+1,t)$, and with $\tau(n,t-[-1])=\tau(n+1,t)$, the difference Fay identity
can be got.

\begin{proposition}
(Difference Fay identity) The following identity holds
\begin{align}
\Delta
\frac{\tau_0(n,t-[s_3])}{\tau_1(n,t)}=s^{-1}_3\left(\frac{\tau_0(n+1,t)\tau_1(n,t-[s_3])}{\tau_1(n,t)\tau_1(n+1,t)}-\frac{\tau_0(n+1,t-[s_3])}{\tau_1(n+1,t)}\right).\label{difffay}
\end{align}
Furthermore by letting $t\to t+[s_3]$,  the identity can also be rewritten as
\begin{align}
\Delta\frac{\tau_1(n,t+[s_3])}{\tau_0(n,t)}=s^{-1}_3\left(\frac{\tau_1(n,t+[s_3])}{\tau_0(n,t)}-\frac{\tau_0(n+1,t+[s_3])\tau_1(n,t)}{\tau_0(n,t)\tau_0(n+1,t)}\right).
\end{align}
\end{proposition}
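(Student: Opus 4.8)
The plan is to obtain the difference Fay identity as a specialization of the Fay identity proved in the previous theorem, reducing it first to a compact four-term bilinear relation among tau functions and then dividing by a suitable product of tau functions so that the discrete derivative $\Delta$ emerges by telescoping. I would aim to produce a single ``workhorse'' relation from which both displayed forms follow by symmetric manipulations.

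First I would set $s_0=s_2=0$ in the Fay identity. Since $[0]=0$, the cyclic image whose leading factor is $s_2$ vanishes, leaving only two terms; cancelling the common factor $s_1 s_3$ then collapses the identity to
\begin{align*}
s_1\tau_0(n,t+[s_3])\tau_1(n,t+[s_1])-s_3\tau_0(n,t+[s_1])\tau_1(n,t+[s_3])=(s_1-s_3)\tau_0(n,t)\tau_1(n,t+[s_1]+[s_3]).
\end{align*}
Next I would shift $t\mapsto t-[s_1]-[s_3]$ to clear the mixed translations, and then specialize $s_1=-1$, invoking the lattice dictionary $\tau(n,t-[-1])=\tau(n+1,t)$ to convert the $-[s_1]=-[-1]$ translations into the shift $n\mapsto n+1$. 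This produces the relation
\begin{align*}
\tau_0(n+1,t)\tau_1(n,t-[s_3])+s_3\tau_0(n,t-[s_3])\tau_1(n+1,t)=(1+s_3)\tau_0(n+1,t-[s_3])\tau_1(n,t),
\end{align*}
which is the common source of both identities.

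For the first form I would divide this relation by $\tau_1(n,t)\tau_1(n+1,t)$; the two terms carrying the factor $s_3$ then assemble into a multiple of $s_3\,\Delta\bigl(\tau_0(n,t-[s_3])/\tau_1(n,t)\bigr)$, while the remaining ratio $\tau_0(n+1,t-[s_3])/\tau_1(n+1,t)$ is moved across to pair with $\tau_0(n+1,t)\tau_1(n,t-[s_3])/(\tau_1(n,t)\tau_1(n+1,t))$; dividing by $s_3$ yields \eqref{difffay}. For the second form I would instead apply $t\mapsto t+[s_3]$ to the same relation and divide by $\tau_0(n,t)\tau_0(n+1,t)$; the identical telescoping, now for the ratio $\tau_1(n,t+[s_3])/\tau_0(n,t)$, gives the stated dual identity.

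The hard part will not be any single computation but the careful bookkeeping of the time translations: tracking how the shifts $\pm[s_1],\pm[s_3]$ compose under the successive substitutions, and in particular justifying the passage from the formal parameter value $s_1=-1$ to the genuine lattice shift via $\tau(n,t-[-1])=\tau(n+1,t)$. Once the four-term relation is in hand, selecting the correct divisor so that the difference of two ratios assembles into $\Delta$ is the only remaining subtlety, and it is essentially forced by matching the $n$ and $n+1$ arguments appearing on each side.
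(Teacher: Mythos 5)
Your proposal is correct and takes essentially the same route as the paper: specialize the Fay identity at $s_0=s_2=0$, shift the time variables, convert the $[-1]$-translation into the lattice shift via $\tau(n,t-[-1])=\tau(n+1,t)$, then divide by $\tau_1(n,t)\tau_1(n+1,t)$ to telescope into $\Delta$, with the dual form obtained by $t\to t+[s_3]$. Your four-term intermediate relation $\tau_0(n+1,t)\tau_1(n,t-[s_3])+s_3\tau_0(n,t-[s_3])\tau_1(n+1,t)=(1+s_3)\tau_0(n+1,t-[s_3])\tau_1(n,t)$ is in fact the corrected version of the paper's displayed intermediate equation, which contains a typographical error (a missing $\tau_1$ factor written as an added term).
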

Set $s_3=\lambda^{-1}$, according to \eqref{tau0} and \eqref{difffay},
\begin{align}
\Delta
w(n,t,\lambda)&=\frac{\tau_0(n+1,t-[\lambda^{-1}])}{\tau_1(n+1,t)}(1+\lambda)^{n+1}e^{\xi(t,\lambda)}-\frac{\tau_0(n,t-[\lambda^{-1}])}{\tau_1(n,t)}(1+\lambda)^{n}e^{\xi(t,\lambda)}\notag\\
&=(1+\lambda)^{n}e^{\xi(t,\lambda)}\left(\Delta\frac{\tau_0(n,t-[\lambda^{-1}])}{\tau_1(n,t)}+\lambda\frac{\tau_0(n+1,t-[\lambda^{-1}])}{\tau_1(n+1,t)}\right)\notag\\
&=\lambda\frac{\tau_0(n+1,t)\tau_1(n,t-[\lambda^{-1}])}{\tau_1(n,t)\tau_1(n+1,t)}(1+\lambda)^{n+1}e^{\xi(t,\lambda)}.\label{deltaw}
\end{align}
In the same way,
\begin{align}
\Delta
w^{*}(n,t,\lambda)=-\frac{\tau_0(n+1,t+[\lambda^{-1}])\tau_1(n,t)}{\tau_0(n,t)\tau_0(n+1,t)}(1+\lambda)^{-n-1}e^{-\xi(t,\lambda)}.\label{deltawstar}
\end{align}
Referring to \cite{DateT,DickeyS} for the properties of tau functions of the KP hierarchy, $\tau_0$ and $\tau_1$ can be viewed as tau functions of the discrete KP hierarchy, that is,
\begin{align*}
\operatorname{res}_{\lambda}\left(\tau_0(n,t-[\lambda^{-1}])\tau_0(n,t^{'}+[\lambda^{-1}])e^{\xi(t-t^{'},\lambda)}\right)=0,\\
\operatorname{res}_{\lambda}\left(\tau_1(n,t-[\lambda^{-1}])\tau_1(n,t^{'}+[\lambda^{-1}])e^{\xi(t-t^{'},\lambda)}\right)=0.
\end{align*}
Thus $\tau_{i}(i=0,1)$ satisfies the following proposition, which is the difference Fay identity of the discrete KP hierarchy.

\begin{proposition}\label{fay}
$\tau_{i}(i=0,1)$ satisfies the difference Fay identity of the discrete KP hierarchy
\begin{align*}
&(1+s^{-1}_3)\Delta\left(\frac{\tau_{i}(n,t+[s_1]-[s_3])}{\tau_{i}(n,t)}\right)\\
=&(s^{-1}_3-s^{-1}_1)\left(\frac{\tau_{i}(n,t-[s_3])\tau_{i}(n+1,t+[s_1])}{\tau_{i}(n,t)\tau_{i}(n+1,t)}-\frac{\tau_{i}(n,t+[s_1]-[s_3])}{\tau_{i}(n,t)}\right).
\end{align*}
\end{proposition}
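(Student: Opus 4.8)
The plan is to derive the identity from the two bilinear relations displayed just above, which assert that each $\tau_i$ is a tau function of the discrete KP hierarchy. Fix $i$ and abbreviate $\tau:=\tau_i$; the whole argument uses only the single bilinear relation
\[
\operatorname{res}_{\lambda}\left(\tau(n,t-[\lambda^{-1}])\,\tau(n,t'+[\lambda^{-1}])\,e^{\xi(t-t',\lambda)}\right)=0,
\]
so it is identical for $i=0$ and $i=1$. First I would specialize $t=t_0+[a]+[b]+[c]$ and $t'=t_0$, so that $e^{\xi(t-t',\lambda)}=\left[(1-a\lambda)(1-b\lambda)(1-c\lambda)\right]^{-1}$, and expand this factor in partial fractions over its three simple poles. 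Applying to each pole the same residue identity used in the proof of the Fay identity above—the computation is identical, now with three Miwa parameters—collapses the residue to a finite sum and yields the three-term Fay identity
\[
a(b-c)\,\tau(n,t_0+[b]+[c])\tau(n,t_0+[a])+b(c-a)\,\tau(n,t_0+[c]+[a])\tau(n,t_0+[b])+c(a-b)\,\tau(n,t_0+[a]+[b])\tau(n,t_0+[c])=0.
\]

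Next I would bring in the lattice direction. Setting $(a,b,c)=(s_1,s_3,-1)$ and replacing $t_0$ by $t-[s_3]-[-1]$, every factor that acquires a $-[-1]$ shift is moved to the next lattice site by the relation $\tau(n,t-[-1])=\tau(n+1,t)$ recalled before the proposition. A direct check shows the three-term identity becomes
\[
s_1(s_3+1)\,\tau(n,t)\tau(n+1,t+[s_1]-[s_3])-s_3(s_1+1)\,\tau(n,t+[s_1]-[s_3])\tau(n+1,t)-(s_1-s_3)\,\tau(n,t-[s_3])\tau(n+1,t+[s_1])=0.
\]
Dividing by $s_1s_3\,\tau(n,t)\tau(n+1,t)$ turns the three coefficients into $1+s_3^{-1}$, $1+s_1^{-1}$ and $s_3^{-1}-s_1^{-1}$. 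Finally, writing $1+s_1^{-1}=(1+s_3^{-1})-(s_3^{-1}-s_1^{-1})$ splits the middle term so that the two quotients built from $\tau(n,t+[s_1]-[s_3])$ at sites $n$ and $n+1$ combine into $(1+s_3^{-1})\,\Delta\!\left(\tau(n,t+[s_1]-[s_3])/\tau(n,t)\right)$, while the remaining pieces assemble into the right-hand side; this is exactly the asserted difference Fay identity.

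I expect the only real obstacle to be the shift bookkeeping in the second step. One must choose the single overall shift $t_0\mapsto t-[s_3]-[-1]$ that places a $-[-1]$ on a different factor in each of the three products, so that $\tau(n,t-[-1])=\tau(n+1,t)$ applies cleanly and produces precisely one $\tau(n+1,\cdot)$ per term; an unlucky choice scatters the lattice shifts and hides the total-difference structure, and the final regrouping into $(1+s_3^{-1})\Delta(\cdot)$ must be read off against the stated coefficients. The residue and partial-fraction computation in the first step, by contrast, is routine and formally identical to the one already carried out for the Fay identity, so I would state it briefly and reuse that computation rather than repeat it in full.
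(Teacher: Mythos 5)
Your proof is correct, and it is genuinely more than what the paper provides: the paper does not prove this proposition at all, but simply displays the two bilinear identities stating that $\tau_0$ and $\tau_1$ are each tau functions of the discrete KP hierarchy and then appeals to \cite{DateT,DickeyS} for the difference Fay identity as a known property of such tau functions. Your argument supplies exactly the derivation being outsourced. The first step (partial fractions plus the residue identity) is the paper's own Theorem 3.1 computation specialized to $s_0=0$, and your second step checks out in detail: with $(a,b,c)=(s_1,s_3,-1)$ and the overall shift $t_0\mapsto t-[s_3]-[-1]$, the relation $\tau(n,t-[-1])=\tau(n+1,t)$ lands on a different factor in each of the three products, giving
\begin{align*}
&s_1(s_3+1)\,\tau(n,t)\tau(n+1,t+[s_1]-[s_3])-s_3(s_1+1)\,\tau(n,t+[s_1]-[s_3])\tau(n+1,t)\\
&\quad-(s_1-s_3)\,\tau(n,t-[s_3])\tau(n+1,t+[s_1])=0,
\end{align*}
and dividing by $s_1s_3\,\tau(n,t)\tau(n+1,t)$ and splitting $1+s_1^{-1}=(1+s_3^{-1})-(s_3^{-1}-s_1^{-1})$ reproduces precisely the stated identity. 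Structurally this is the single-tau-function analogue of the paper's own passage from Theorem 3.1 to Proposition 3.1, so your route and the paper's intended route rest on the same mechanism; what your version buys is self-containedness, at the cost of repeating a computation the paper prefers to cite. One point a careful write-up should record: applying the residue identity pole by pole to a series $f(\lambda)=\sum_{i\geq 0}c_i\lambda^{-i}$ produces at each pole an extra contribution proportional to the constant coefficient $c_0$; these extra terms cancel when summed over the three poles because $\frac{a}{(a-b)(a-c)}+\frac{b}{(b-a)(b-c)}+\frac{c}{(c-a)(c-b)}=0$. The same silent cancellation underlies the paper's proof of Theorem 3.1, so this is not a gap relative to the paper's standard, but it is the one place where ``routine'' hides a necessary check.
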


By setting $s_1=\lambda^{-1}$ and $s_3=\mu^{-1}$, we have
\begin{eqnarray}
\begin{aligned}
&\frac{1}{\lambda-\mu}\Delta\frac{(1+\mu)^{n}}{(1+\lambda)^{n}}e^{\xi(t,\mu)-\xi(t,\lambda)}\frac{\tau_{i}(n,t+[\lambda^{-1}]-[\mu^{-1}])}{\tau_{i}(n,t)}\\
=&\frac{(1+\mu)^{n}}{(1+\lambda)^{n+1}}e^{\xi(t,\mu)-\xi(t,\lambda)}\frac{\tau_{i}(n,t-[\mu^{-1}])\tau_{i}(n+1,t+[\lambda^{-1}])}{\tau_{i}(n,t)\tau_{i}(n+1,t)
}.
\end{aligned}
\end{eqnarray}

Having discussed the Fay identity and difference Fay identity, we now turn to give the definition of the squared eigenfunction potential for the dmKP hierarchy. For the eigenfunction $\phi$ and adjoint eigenfunction $\psi$ of the dmKP hierarchy, there exists a function $S(\phi,\Delta\psi)$, $s.t.$
\begin{align*}
S(\phi,\Delta\psi)_{\Delta}&=\phi(\Delta\psi),\\
S(\phi,\Delta\psi)_{t_{n}}&=res_{\Delta}(\varGamma\Delta^{-1}(\Delta\psi)(L^{i})_{\geq1}\phi\Delta^{-1}).
\end{align*}
And another SE potential $\hat S(\Delta\phi,\psi)$ is defined by
\begin{align*}
\hat S(\Delta\phi,\varGamma\psi)=\phi\psi-S(\phi,\Delta\psi),
\end{align*}
which satisfies the compatible equations
\begin{align*}
\hat S(\Delta\phi,\varGamma\psi)_{\Delta}&=(\Delta\phi)\varGamma\psi,\\
\hat
S(\Delta\phi,\varGamma\psi)_{t_{n}}&=res_{\Delta}(\Delta^{-1}\varGamma\psi\Delta(L^{i})_{\geq1}\Delta^{-1}(\Delta\phi)\Delta^{-1}).
\end{align*}

\begin{proposition}\label{spectral}
For the eigenfunction $\phi$ and the adjoint eigenfunction $\psi$ of the dmKP hierarchy,
\begin{align}
\phi(n,t)&=\operatorname{res}_{\lambda}\left(w(n,t,\lambda)S(\phi(n,t^{'},\Delta w^{*}(n,t^{'},\lambda)))\right),\label{phi}\\
\psi(n,t)&=\operatorname{res}_{\lambda}\left(w^{*}(n,t,\lambda)\hat S(\Delta w(n,t^{'},\lambda),\psi(n+1,t^{'}))\right).\label{psi}
\end{align}
\end{proposition}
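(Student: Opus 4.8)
The plan is to establish \eqref{phi} in full; equation \eqref{psi} is its mirror image under the involution $w\leftrightarrow w^{*}$, $\phi\leftrightarrow\psi$, $S\leftrightarrow\hat S$, $\Delta\leftrightarrow\Delta^{*}$, which preserves both the Lax hierarchy and the bilinear identity \eqref{bilinear}, so I would only transcribe the argument. The preparatory move is to trade the potential for a pseudo-difference operator acting on $w^{*}$: since $\phi\,\Delta w^{*}=(\phi\circ\Delta)w^{*}$ and the defining relation of the potential is $\Delta S=\phi\,\Delta w^{*}$, I would write $S(\phi(n,t'),\Delta w^{*}(n,t',\lambda))=(\Delta^{-1}\phi(\cdot,t')\,\Delta)\,w^{*}(n,t',\lambda)$ and denote the right-hand side of \eqref{phi} by $F(n,t;t')$, keeping $t'$ as a free parameter. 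The goal is then to prove the two statements $F|_{t=t'}=\phi(n,t')$ and $\partial_{t_{i}}F=(L^{i})_{\geq1}F$; since $\phi$ obeys the same flow $\phi_{t_{i}}=(L^{i})_{\geq1}\phi$ with the same value at the multi-time point $t=t'$, uniqueness of the compatible linear system forces $F(n,t;t')=\phi(n,t)$, which in particular is independent of $t'$.

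For the value at $t=t'$ I would use the discrete product rule $\Delta(fg)=(\Delta f)(\varGamma g)+f(\Delta g)$, which turns the operator form of $S$ into $S=\phi w^{*}-\Delta^{-1}\big((\Delta\phi)(\varGamma w^{*})\big)$, the subtracted term being exactly $\hat S(\Delta\phi,\varGamma w^{*})$. Hence $F|_{t=t'}=\phi\,\operatorname{res}_{\lambda}(w w^{*})-\operatorname{res}_{\lambda}\big(w\,\Delta^{-1}((\Delta\phi)(\varGamma w^{*}))\big)$. The first residue equals $1$ by \eqref{bilinear}. For the second, solving the first-order difference equation as $\Delta^{-1}((\Delta\phi)\varGamma w^{*})=c(n,\lambda)\lambda^{-1}(1+\lambda)^{-n}e^{-\xi}$ gives $c=O(\lambda^{-1})$; multiplying by $w=\hat w(n,\lambda)(1+\lambda)^{n}e^{\xi}$ with $\hat w=z_{0}+O(\lambda^{-1})$ as in \eqref{what} produces $\hat w\,c\,\lambda^{-1}=O(\lambda^{-2})$, which carries no $\lambda^{-1}$ term, so the residue vanishes and $F|_{t=t'}=\phi(n,t')$.

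For the flow equation I would differentiate under $\operatorname{res}_{\lambda}$. As $S$ does not depend on $t$, the wave-function evolution $w_{t_{i}}=(L^{i})_{\geq1}w$ gives $\partial_{t_{i}}F=\operatorname{res}_{\lambda}\big(((L^{i})_{\geq1}w)\,S\big)$, whereas $(L^{i})_{\geq1}F=\operatorname{res}_{\lambda}\big((L^{i})_{\geq1}(wS)\big)$ because $\operatorname{res}_{\lambda}$ commutes with the difference operator in $n$. Their difference is the residue of the cross terms of the discrete Leibniz rule, i.e.\ of the terms in which $(L^{i})_{\geq1}$ hits $S$; each of these can be reduced through $\Delta S=\phi\,\Delta w^{*}$ together with \eqref{deltaw}--\eqref{deltawstar} to a residue killed by \eqref{bilinear}. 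Equivalently, one checks $\partial_{t'_{i}}F=0$ directly: with $\phi_{t'_{i}}=(L^{i})_{\geq1}\phi$ and the adjoint evolution $w^{*}_{t'_{i}}=-(\Delta^{-1}(L^{i})^{*}_{\geq1}\Delta)w^{*}$ one finds $\partial_{t'_{i}}F=\operatorname{res}_{\lambda}\big(w\,\Delta^{-1}\mathcal D\Delta\,w^{*}\big)$ with $\mathcal D=((L^{i})_{\geq1}\phi)-\phi(L^{i})^{*}_{\geq1}$, and this residue vanishes precisely because $\partial_{t'_{i}}S$ is an honest $t'_{i}$-derivative, which is the content of the time-evolution of the potential recorded before the statement.

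The hard part is this last residue identity. Because $w$ is evaluated at $t$ while $w^{*}$ sits at $t'$, the factor $e^{\xi(t,\lambda)-\xi(t',\lambda)}$ does not cancel and carries arbitrarily high positive powers of $\lambda$, so the pairing cannot be collapsed to a naive $\Delta$-residue as at equal times; the cancellation genuinely rests on the compatibility built into the squared eigenfunction potential, equivalently on differentiating \eqref{bilinear} in $t'$ to obtain the orthogonality $\operatorname{res}_{\lambda}(w(t)\,w^{*}_{t'_{i}}(t'))=0$ and bootstrapping from it. Making the bookkeeping of the nonlocal operator $\Delta^{-1}\mathcal D\Delta$ acting on $w^{*}$ precise, and confirming that every Leibniz cross term is annihilated by \eqref{bilinear}, is the one step I expect to require real care; the remaining pieces are the clean residue computations above.
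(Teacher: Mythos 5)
Your reduction of \eqref{phi} to the two claims $F|_{t=t'}=\phi$ and $\partial_{t_i}F=(L^i)_{\geq1}F$ is a legitimate strategy, and your equal-time computation is correct (it is in substance the $\alpha=0$ term of the paper's expansion). But the proposal has a genuine gap at exactly the step you defer: the flow equation for $F$, equivalently the vanishing of all Leibniz cross terms, is never proved, and that vanishing \emph{is} the content of the proposition. Moreover, your stated mechanism --- that each cross term is ``a residue killed by \eqref{bilinear}'' --- is not right as it stands. The identity \eqref{bilinear} pairs $w$ and $w^*$ at the \emph{same} lattice site $n$, while the cross terms pair shifted wave functions at unequal sites and unequal times: already the lowest one is $\operatorname{res}_\lambda\left((\varGamma w)(n,t,\lambda)\,\phi(n,t')\,(\Delta w^*)(n,t',\lambda)\right)$, which by \eqref{tau0} and \eqref{deltawstar} is a $\lambda$-independent multiple of $\operatorname{res}_\lambda\left(\tau_0(n+1,t-[\lambda^{-1}])\,\tau_0(n+1,t'+[\lambda^{-1}])\,e^{\xi(t-t',\lambda)}\right)$. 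Its vanishing is the KP bilinear identity satisfied by $\tau_0$ alone (recorded in the paper just after \eqref{deltawstar}), not a consequence of \eqref{bilinear}; and the cross terms coming from $\Delta^k$ with $k\geq 2$ require in addition shifted, discrete versions of these identities for $\tau_0$ and $\tau_1$. None of this bookkeeping is carried out, and you acknowledge as much, so the proof is incomplete precisely where it matters.

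For comparison, the paper never meets unequal-time pairings at all: it Taylor-expands the wave function about $t'$ through the Lax flows, $\partial^\alpha w=A_\alpha w$ with $A_\alpha=\sum_{i\geq1}a_{i,\alpha}\Delta^i$, writes the potential in operator form $S(\phi,\Delta w^*)=\Delta^{-1}\phi(\Delta w^*)$, and then applies, term by term at equal times, the key formula $\operatorname{res}_\lambda\left(P(n)(1+\lambda)^n e^{\xi(t,\lambda)}\right)\left(Q(n-1)(1+\lambda)^{-n}e^{-\xi(t,\lambda)}\right)=\operatorname{res}_\Delta P(n)Q^*(n)$, which turns every residue in $\lambda$ into an operator residue, $\operatorname{res}_\Delta A_\alpha ZZ^{-1}\phi\Delta^{-1}=A_\alpha(\phi)$; the Taylor series then resums to $\phi(n,t')$. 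This formula (cited from Dickey) is the tool your argument is missing. If you wish to keep your route, the honest repair is to observe that \eqref{bilinear} holds identically in $(t,t')$, so all its mixed $t,t'$-derivatives vanish, and to convert lattice shifts into $t_1$-derivatives via $\partial_{t_1}w=\Delta w$ and $\partial_{t_1}w^*=\Delta\varGamma^{-1}w^*$ before invoking those differentiated identities; but carrying that out systematically essentially reconstructs the paper's operator-residue computation.
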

\begin{proof}
We only prove \eqref{phi} here since the proof of \eqref{psi} is similar. Denote $A_{\alpha}=\sum\nolimits_{i\geq1} a_{i,\alpha}\Delta^{i}$, by the virtue of $\partial_{t_{i}}w(n,t,\lambda)=(L^{i})_{\geq1}(w(n,t,\lambda))$, we have
\begin{align*}
&\operatorname{res}_{\lambda}\partial^{\alpha}w(n,t,\lambda)\Delta^{-1}\phi(n,t)(\Delta
w^{*}(n,t,\lambda))\frac{(l-l^{'})^{\alpha}}{\alpha}\\
=&-\sum_{\alpha
\geq0}\operatorname{res}_{\lambda}A_{\alpha}Z(n)(1+\lambda)^{n}e^{\xi(t,\lambda)}\Delta^{-1}\varGamma\phi(n-1,t)(Z^{-1^{*}}(n-1))(1+\lambda)^{-n}e^{-\xi(t,\lambda)}\frac{(l-l^{'})^{\alpha}}{\alpha}\\
=&\sum_{\alpha\geq0}\operatorname{res}_{\Delta}A_{\alpha}Z(n)Z^{-1}(n)\phi(n,t)\Delta^{-1}\frac{(l-l^{'})^{\alpha}}{\alpha}\\
=&\phi(n,t^{'}).
\end{align*}
In the second step we have used the formula\cite{DickeyS} as
\begin{align*}
\operatorname{res}_{\lambda}(P(n)(1+\lambda)^{n}e^{\xi(t,\lambda)})(Q(n-1)(1+\lambda)^{-n}e^{-\xi(t,\lambda)})=\operatorname{res}_{\Delta}P(n)Q^{*}(n),
\end{align*}
where $P(n)$ and $Q(n)$ are pseudo-difference operators, $Q^{*}$ is the adjoint of $Q$.
\end{proof}

Next, to get the expressions of the squared eigenfunction potentials for the dmKP hierarchy, we introduce the vertex operator of the dmKP hierarchy. It is defined as
\begin{align*}
X(n,\lambda,\mu)=&\frac{(1+\mu)^{n}}{\lambda(1+\lambda)^{n}}e^{\xi(t+[\lambda^{-1}],\mu)-\xi(t,\lambda)}e^{\sum\limits_{l=1}^ {\infty}\frac{1}{l}(\lambda^{-1}-\mu^{-1})\frac{\partial}{\partial t_{l}}}\\
=&\frac{(1+\mu)^{n}}{\mu(1+\lambda)^{n}}e^{\xi(t,\mu)-\xi(t-[\mu^{-1}],\lambda)}e^{\sum\limits_{l=1}^
 {\infty}\frac{1}{l}(\lambda^{-1}-\mu^{-1})\frac{\partial}{\partial t_{l}}}+\delta(\lambda,\mu),
\end{align*}
where
\begin{align*}
\theta(\lambda)=&-\sum\limits_{l=1}^{\infty}\lambda^{l}t_{l}+\sum_{l=1\infty}\frac{1}{l}
\lambda^{-l}\frac{\partial}{\partial t_{l}},\\
\delta(\lambda,\mu)=&\frac{1}{\mu}+\sum\limits_{n=-\infty}^{\infty}(\frac{\mu}{\lambda})^{n}=\frac{1}{\lambda}\frac{1}{1-\frac{\mu}{\lambda}}+\frac{1}{\mu}\frac{1}{1-\frac{\lambda}{\mu}}.
\end{align*}
Thus we have
\begin{align}
\frac{X(n,\lambda,\mu)\tau_0(n,t)}{\tau_0(n,t)}=&\frac{(1+\mu)^{n}}{\lambda(1+\lambda)^{n}}e^{\xi(t+[\lambda^{-1}],\mu)-\xi(t,\lambda)}\frac{\tau_0(n,t+[\lambda^{-1}]-[\mu^{-1}])}{\tau_0(n,t)}\notag\\
=&(1-\frac{\mu}{\lambda})w(n,t+[\lambda^{-1}],\mu)w^{*}(n,t,\lambda),\label{x1}
\end{align}
\begin{align}
\frac{X(n,\lambda,\mu)\tau_1(n,t)}{\tau_1(n,t)}=&-\frac{(1+\mu)^{n}}{\mu(1+\lambda)^{n}}e^{\xi(t,\mu)-\xi(t-[\mu^{-1}],\lambda)}\frac{\tau_1(n,t+[\lambda^{-1}]-[\mu^{-1}])}{\tau_1(n,t)}\notag\\
=&\frac{\lambda}{\mu}w(n,t,\mu)w^{*}(n,t-[\mu^{-1}],\lambda)+\delta(\lambda,\mu).\label{x2}
\end{align}
On the other hand, when $\mid\mu\mid<\mid\lambda\mid$, by using \eqref{deltaw}-\eqref{deltawstar} and Proposition \ref{fay},
\begin{align}
\Delta\left(\frac{X(n,\lambda,\mu)\tau_0(n,t)}{\tau_0(n,t)}\right)=&\frac{\lambda-\mu}{\lambda}\frac{(1+\mu)^{n}}{(1+\lambda)^{n+1}}e^{\xi(t,\mu)-\xi(t,\lambda)}\frac{\tau_0(n,t-[\mu^{-1}])\tau_0(n+1,t+[\lambda^{-1}])}{\tau_0(n,t)\tau_0(n+1,t)}\notag\\
=&(\frac{\mu}{\lambda}-1)w(n,t,\mu)\left(\Delta w^{*}(n,t,\lambda)\right),\label{x3}
\end{align}
\begin{align}
\Delta\left(\frac{X(n,\lambda,\mu)\tau_1(n,t)}{\tau_1(n,t)}\right)=&\frac{\lambda-\mu}{\lambda}\frac{(1+\mu)^{n}}{(1+\lambda)^{n+1}}e^{\xi(t,\mu)-\xi(t,\lambda)}\frac{\tau_1(n,t-[\mu^{-1}])\tau_1(n+1,t+[\lambda^{-1}])}{\tau_1(n,t)\tau_1(n+1,t)}\notag\\
=&\frac{\lambda}{\mu}\left(\Delta w(n,t,\mu)\right) w^{*}(n+1,t,\lambda).\label{x4}
\end{align}
Combing \eqref{x1}-\eqref{x4}, as the second main theorem in this section, the expressions of the squared eigenfunction potentials are derived.
\begin{thm}
Up to a constant, the expressions of the basic squared eigenfunction potentials are listed below.
\begin{align}
&S\left(w(n,t,\mu), \Delta w^{*}(n,t,\lambda)\right)= -w(n,t+[\lambda^{-1}],\mu)w^{*}(n,t,\lambda),\\
&\hat S\left(\Delta w(n,t,\mu), w^{*}(n+1,t,\lambda)\right)= w(n,t,\mu)w^{*}(n,t-[\mu^{-1}],\lambda).
\end{align}
\end{thm}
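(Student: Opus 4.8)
The plan is to read off both potentials as normalized versions of the vertex-operator quantities $\frac{X(n,\lambda,\mu)\tau_0}{\tau_0}$ and $\frac{X(n,\lambda,\mu)\tau_1}{\tau_1}$ already computed in \eqref{x1}--\eqref{x4}. Each squared eigenfunction potential is fixed by its pair of compatible defining equations only up to an additive constant, so it suffices to verify that the proposed right-hand sides satisfy the relevant $\Delta$-equation, and then to pin down the residual ambiguity using the $t_i$-equation. The algebraic core is that the scalar prefactors $\left(1-\frac{\mu}{\lambda}\right)$ and $\frac{\lambda}{\mu}$ occurring in \eqref{x1}--\eqref{x4} cancel against the normalizations one divides by, while the $n$-independent term $\delta(\lambda,\mu)$ is annihilated by $\Delta$.

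For the first identity I would start from \eqref{x1} and rewrite the claimed expression as
\begin{align*}
-w(n,t+[\lambda^{-1}],\mu)\,w^{*}(n,t,\lambda)=\frac{1}{\frac{\mu}{\lambda}-1}\,\frac{X(n,\lambda,\mu)\tau_0(n,t)}{\tau_0(n,t)}.
\end{align*}
Applying $\Delta$ and substituting \eqref{x3} makes the factor $\left(\frac{\mu}{\lambda}-1\right)$ cancel, leaving exactly $w(n,t,\mu)\left(\Delta w^{*}(n,t,\lambda)\right)$. This is the defining relation $S(\phi,\Delta\psi)_{\Delta}=\phi(\Delta\psi)$ with $\phi=w(n,t,\mu)$ and $\psi=w^{*}(n,t,\lambda)$, so the two sides of the first formula coincide up to a quantity annihilated by $\Delta$.

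For the second identity I would use \eqref{x2} to write
\begin{align*}
w(n,t,\mu)\,w^{*}(n,t-[\mu^{-1}],\lambda)=\frac{\mu}{\lambda}\left(\frac{X(n,\lambda,\mu)\tau_1(n,t)}{\tau_1(n,t)}-\delta(\lambda,\mu)\right).
\end{align*}
Since $\delta(\lambda,\mu)$ does not depend on $n$ it drops out under $\Delta$; applying $\Delta$ and substituting \eqref{x4} cancels the factor $\frac{\mu}{\lambda}$ and produces $\left(\Delta w(n,t,\mu)\right)w^{*}(n+1,t,\lambda)$. Because $\varGamma w^{*}(n,t,\lambda)=w^{*}(n+1,t,\lambda)$, this is precisely $\hat S(\Delta\phi,\varGamma\psi)_{\Delta}=(\Delta\phi)\varGamma\psi$, so the second formula also holds up to a quantity annihilated by $\Delta$.

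The main obstacle is upgrading annihilation by $\Delta$ (which only forces independence of $n$, possibly still a function of $t$) to the genuine constant asserted in the statement; this forces one to check the $t_i$-equations in the definitions of $S$ and $\hat S$. For this I would differentiate the proposed expressions along $\partial_{t_i}$ using $\partial_{t_i}w=(L^{i})_{\geq1}w$ and the corresponding adjoint flow for $w^{*}$, and then convert the resulting products of wave functions into operator residues through the correspondence $\operatorname{res}_{\lambda}(P(n)(1+\lambda)^{n}e^{\xi})(Q(n-1)(1+\lambda)^{-n}e^{-\xi})=\operatorname{res}_{\Delta}P(n)Q^{*}(n)$ used in Proposition \ref{spectral}. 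Matching these against $\operatorname{res}_{\Delta}(\varGamma\Delta^{-1}(\Delta\psi)(L^{i})_{\geq1}\phi\Delta^{-1})$ and its $\hat S$ counterpart, while tracking the $\delta(\lambda,\mu)$ term and the scalar normalizations, is the most delicate bookkeeping and the step I expect to cost the most effort.
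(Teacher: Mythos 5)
Your proposal is correct and takes essentially the same route as the paper: the paper also obtains both formulas by combining \eqref{x1}--\eqref{x4} with the defining $\Delta$-relations of $S$ and $\hat S$, the prefactors $\left(1-\frac{\mu}{\lambda}\right)$ and $\frac{\lambda}{\mu}$ cancelling and the $n$-independent term $\delta(\lambda,\mu)$ dropping under $\Delta$. The $t_i$-flow verification you flag as the remaining hard step is not carried out in the paper either; the paper simply absorbs that residual ambiguity into the phrase ``up to a constant'' in the statement of the theorem.
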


Considering the Propsition\ref {spectral} and the above expressions of the squared eigenfunction potentials, this will naturally lead to the following corollary.
\begin{corollary}
For any eigenfunction $\phi$ and adjoint eigenfunction $\psi$ of dmKP hierarchy,
\begin{align*}
&S\left(\phi(n,t), \Delta w^{*}(n,t,\lambda)\right)= -\phi(n,t+[\lambda^{-1}])w^{*}(n,t,\lambda),\\
&\hat S\left(\Delta w(n,t,\mu),\psi(n+1,t)\right)= w(n,t,\mu)\psi(n,t-[\mu^{-1}]),\\
&S\left(w(n,t,\mu), \psi(n,t)\right)= -w(n,t,\mu)\left(\psi(n,t)-\psi(n,t-[\mu^{-1}])\right),\\
&\hat S\left(\Delta \phi(n,t), w^{*}(n+1,t,\lambda)\right)=
w(n,t,\mu)\left(\phi(n,t)-\phi(n,t+[\lambda^{-1}])\right),\\
&S\left(\phi(n,t,),\Delta \psi(n,t)\right)= \operatorname{res}_{\mu}\left(\rho(\mu)\cdot \operatorname{res}_{\lambda}(S\left(w(n,t,\mu),
\Delta w^{*}(n,t,\lambda)\right)h(\lambda))\right) ,\\
&\hat S\left(\Delta \phi(n,t,), \psi(n+1,t)\right)= - \operatorname{res}_{\mu}\left(\rho(\mu)\cdot \operatorname{res}_{\lambda}(\hat
S\left(\Delta w(n,t,\mu), w^{*}(n+1,t,\lambda)\right)h(\lambda))\right),
\end{align*}
where
\begin{align*}
\rho(\mu)&=S(\phi(n,t^{'}), \Delta w^{*}(n,t^{'},\lambda)),\\
h(\lambda)&=\hat S( \Delta w(n,t,\mu), \psi(n+1,t)).
\end{align*}
\end{corollary}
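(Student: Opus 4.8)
The plan is to deduce all six identities purely formally from three ingredients already in hand: the spectral representation of Proposition \ref{spectral}, the closed forms of the basic squared eigenfunction potentials established in the preceding theorem, and the bilinearity of $S$ and $\hat S$ in their two arguments together with the structural relation $\hat S(\Delta\phi,\varGamma\psi)=\phi\psi-S(\phi,\Delta\psi)$. The guiding principle is that Proposition \ref{spectral} writes an arbitrary eigenfunction $\phi$ (resp.\ adjoint eigenfunction $\psi$) as a $\operatorname{res}$-superposition of the wave functions $w(n,t,\mu)$ (resp.\ adjoint wave functions $w^{*}(n,t,\lambda)$); substituting such a superposition into one slot of $S$ or $\hat S$ and pulling the residue outside by linearity collapses every statement onto the two basic potentials.

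For the first two identities I would generalize exactly one slot. Using the reconstruction \eqref{phi}, write $\phi(n,t)=\operatorname{res}_{\mu}\big(w(n,t,\mu)\,\rho(\mu)\big)$ with $\rho(\mu)=S(\phi(n,t'),\Delta w^{*}(n,t',\mu))$. Linearity of $S$ in its first argument then gives
\begin{align*}
S\big(\phi(n,t),\Delta w^{*}(n,t,\lambda)\big)
&=\operatorname{res}_{\mu}\Big(\rho(\mu)\,S\big(w(n,t,\mu),\Delta w^{*}(n,t,\lambda)\big)\Big)\\
&=-\,w^{*}(n,t,\lambda)\operatorname{res}_{\mu}\Big(\rho(\mu)\,w(n,t+[\lambda^{-1}],\mu)\Big)\\
&=-\,\phi(n,t+[\lambda^{-1}])\,w^{*}(n,t,\lambda),
\end{align*}
where the middle step inserts the basic potential and the last step recollapses the $\mu$-residue via Proposition \ref{spectral} evaluated at the shifted time $t+[\lambda^{-1}]$. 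The second identity is the mirror computation: substitute the $w^{*}$-superposition of $\psi$ from \eqref{psi} into the second slot of $\hat S$, apply the basic $\hat S$-potential, and recollapse in $\lambda$.

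With these two ``half-generalized'' identities in place, I would obtain the remaining four by combining them with $\hat S(\Delta\phi,\varGamma\psi)=\phi\psi-S(\phi,\Delta\psi)$ and by iterating the substitution. The third and fourth identities follow by feeding the output of the second (resp.\ first) identity into this relation, which trades one potential for the other and produces the difference $\psi(n,t)-\psi(n,t-[\mu^{-1}])$ (resp.\ $\phi(n,t)-\phi(n,t+[\lambda^{-1}])$). The last two identities are the ``fully generalized'' cases: here I would substitute the spectral representations of both $\phi$ and $\psi$ simultaneously, so that bilinearity turns $S(\phi,\Delta\psi)$ into the iterated residue $\operatorname{res}_{\mu}\big(\rho(\mu)\operatorname{res}_{\lambda}(S(w(n,t,\mu),\Delta w^{*}(n,t,\lambda))\,h(\lambda))\big)$, and symmetrically for $\hat S$, which are precisely the stated formulas.

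The main obstacle I anticipate is not any single computation but the justification of the formal manipulations underlying every step: namely, that the potentials $S$ and $\hat S$, defined through the formal inverse $\Delta^{-1}$, genuinely commute with the residue operators $\operatorname{res}_{\mu}$ and $\operatorname{res}_{\lambda}$ and may therefore be applied termwise to the spectral superpositions. This requires working in the region $|\mu|<|\lambda|$ in which the relevant geometric expansions converge, and keeping careful track of the time-argument shifts by $[\lambda^{-1}]$ and $[\mu^{-1}]$ so that each recollapsed residue again matches Proposition \ref{spectral}. Once the interchange is seen to be legitimate and the bookkeeping of shifts is fixed, each identity reduces to a one- or two-fold application of the basic squared eigenfunction potentials.
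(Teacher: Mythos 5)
Your proposal is correct and matches the paper's intended argument: the paper gives no written proof of this corollary, stating only that it follows from Proposition \ref{spectral} together with the closed forms of the basic squared eigenfunction potentials, which is precisely your route of substituting the spectral (residue) representations of $\phi$ and $\psi$ into one or both slots of $S$ and $\hat S$, invoking bilinearity, and recollapsing via the basic potentials. Your added care about the convergence region $|\mu|<|\lambda|$ and the time-shift bookkeeping supplies detail the paper omits entirely.
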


\bigskip
\section{\sc \bf Squared eigenfunction symmetry}

Let $\phi_1$, $\cdots$, $\phi_{m}$ and $\psi_1$, $\cdots$, $\psi_{m}$ be eigenfunctions and adjoint
eigenfunctions of the dmKP hierarchy. The squared eigenfunction symmetry flow for the dmKP hierarchy is defined as
\begin{align*}
\partial_{\alpha}L&=[\sum\limits_{i=1}^{m}\phi_{i}\Delta^{-1}\varGamma\psi_{i}\Delta,L].
\end{align*}
Equvilently, the symmetry on the wave operator can be got as
\begin{align*}
\partial_{\alpha}Z&=\sum\limits_{i=1}^{m}\phi_{i}\Delta^{-1}\varGamma\psi_{i}\Delta Z.
\end{align*}
Then the squared eigenfunction symmetry flow acting on the wave function $w(n,t,\lambda)$ and the adjoint wave function
$w^{*}(n,t,\lambda)$ are in the following proposition.
\begin{proposition}
The wave function $w(n,t,\lambda)$ and the adjoint wave function $w^{*}(n,t,\lambda)$ satisfy the following
equations
\begin{align}
\partial_{\alpha}w(n,t,\lambda)&=\sum\limits_{i=1}^{m}\phi_{i}\hat S\left(\varGamma\psi_{i},\Delta
w(n,t,\lambda)\right),\label{partialw}\\
\partial_{\alpha}w^{*}(n,t,\lambda)&=-\sum\limits_{i=1}^{m}\psi_{i}S\left(\phi_{i},\Delta
w^{*}(n,t,\lambda)\right).
\end{align}
\end{proposition}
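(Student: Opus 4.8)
The plan is to push the symmetry flow from the dressing operator $Z$ onto the wave functions and then to recognise the resulting $\Delta^{-1}$ expressions as the squared eigenfunction potentials $\hat S$ and $S$ constructed above. Since the bare factor $(1+\lambda)^{n}e^{\xi(t,\lambda)}$ does not depend on the flow variable $\alpha$, I would first differentiate \eqref{w} and insert the equivalent form $\partial_{\alpha}Z=\sum_{i=1}^{m}\phi_{i}\Delta^{-1}\varGamma\psi_{i}\Delta Z$ of the symmetry to obtain
\begin{align*}
\partial_{\alpha}w(n,t,\lambda)&=(\partial_{\alpha}Z)(1+\lambda)^{n}e^{\xi(t,\lambda)}=\sum_{i=1}^{m}\phi_{i}\,\Delta^{-1}\varGamma\psi_{i}\Delta\bigl(Z(1+\lambda)^{n}e^{\xi(t,\lambda)}\bigr)\\
&=\sum_{i=1}^{m}\phi_{i}\,\Delta^{-1}\varGamma\psi_{i}\,\Delta w(n,t,\lambda),
\end{align*}
so that the entire content of \eqref{partialw} reduces to the identification $\Delta^{-1}\varGamma\psi_{i}\,\Delta w=\hat S(\varGamma\psi_{i},\Delta w)$.

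To establish that identification I would use the two defining relations of $\hat S$. Applying $\Delta$ to $\Delta^{-1}\varGamma\psi_{i}\,\Delta w$ returns the product of $\varGamma\psi_{i}$ with $\Delta w$, which is exactly the right-hand side of the $\Delta$-equation $\hat S(\Delta\phi,\varGamma\psi)_{\Delta}=(\Delta\phi)(\varGamma\psi)$ read for the arguments $\varGamma\psi_{i}$ and $\Delta w$; hence the two sides agree up to a term independent of $n$. A key preliminary observation is that the wave function is itself an eigenfunction: differentiating \eqref{w} and using the Sato equation together with $L^{i}w=\lambda^{i}w$ gives $\partial_{t_{i}}w=(L^{i})_{\geq1}(w)$, so $(\Delta w,\varGamma\psi_{i})$ is an admissible pair for the potential $\hat S$. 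To fix the $n$-independent ambiguity I would then match time derivatives, comparing $\partial_{t_{n}}\bigl(\Delta^{-1}\varGamma\psi_{i}\Delta w\bigr)$ with the companion equation $\hat S(\Delta\phi,\varGamma\psi)_{t_{n}}=\operatorname{res}_{\Delta}(\Delta^{-1}\varGamma\psi\Delta(L^{i})_{\geq1}\Delta^{-1}(\Delta\phi)\Delta^{-1})$, using $\partial_{t_{i}}w=(L^{i})_{\geq1}(w)$ and the adjoint eigenfunction equation for $\psi_{i}$; this forces the two expressions to coincide exactly and yields \eqref{partialw}.

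For the adjoint wave function I would run the symmetric argument. Differentiating $ZZ^{-1}=I$ gives $\partial_{\alpha}Z^{-1}=-Z^{-1}(\partial_{\alpha}Z)Z^{-1}=-\sum_{i=1}^{m}Z^{-1}\phi_{i}\Delta^{-1}\varGamma\psi_{i}\Delta$, and substituting this into the definition \eqref{wstar} of $w^{*}$ and moving the operators onto the bare factor $(1+\lambda)^{-n}e^{-\xi(t,\lambda)}$ produces a sum of terms of the shape $\Delta^{-1}(\phi_{i}\,\Delta w^{*})$ carrying the overall minus sign from $\partial_{\alpha}Z^{-1}$. These I would identify with $S(\phi_{i},\Delta w^{*})$ through the relation $S(\phi,\Delta\psi)_{\Delta}=\phi(\Delta\psi)$ and its time counterpart, now invoking that $w^{*}$ solves the adjoint linear equations in place of $\partial_{t_{i}}w=(L^{i})_{\geq1}(w)$; this gives the second identity of the proposition.

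The Leibniz expansions and the commutations of $\varGamma$ with $\Delta^{-1}$ are routine and I would suppress them. The one genuinely delicate point is the bookkeeping of the shift operators $\varGamma$ and of the $\Delta$-integration constant: the $\Delta$-equations alone only pin $\Delta^{-1}\varGamma\psi_{i}\Delta w$ and $\Delta^{-1}(\phi_{i}\Delta w^{*})$ down to an $n$-independent function, so the hard part will be checking that the time equations of the potentials are genuinely compatible with the linear equations for $w$, $w^{*}$ and the (adjoint) eigenfunction equations for $\phi_{i}$, $\psi_{i}$, which is precisely what upgrades the matching to an exact equality.
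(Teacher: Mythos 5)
Your proposal follows essentially the same route as the paper's proof: differentiate the dressing expressions for $w$ and $w^{*}$, substitute the symmetry flow $\partial_{\alpha}Z=\sum_{i}\phi_{i}\Delta^{-1}\varGamma\psi_{i}\Delta Z$ (and the induced flow on $Z^{-1}$, respectively $(Z^{-1})^{*}$), and identify the resulting expressions $\Delta^{-1}(\varGamma\psi_{i}\,\Delta w)$ and $\Delta^{-1}(\phi_{i}\,\Delta w^{*})$ with the potentials $\hat S$ and $S$ through their defining relations. The only difference is presentational: the paper makes this identification directly, whereas you additionally spell out that the $\Delta$-equation fixes the potential only up to an $n$-independent term and that the time-flow equations remove this ambiguity — a point the paper leaves implicit.
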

\begin{proof}
The squared eigenfunction symmetry on the wave operator $Z^{-1}$ and $(Z^{-1})^{*}$ can be got as following
\begin{align*}
\partial_{\alpha}Z^{-1}&= -Z^{-1}\cdot\sum\limits_{i=1}^{m}\phi_{i}\Delta^{-1}\varGamma\psi_{i}\Delta,\\
\partial_{\alpha}(Z^{-1})^{*}&=
\Delta\varGamma^{-1}\cdot\sum\limits_{i=1}^{m}\psi_{i}\Delta^{-1}\phi_{i}(Z^{-1})^{*}.
\end{align*}
With \eqref{w} and \eqref{wstar}, taking derivative by $\partial_{\alpha}$, we have
\begin{align*}
\partial_{\alpha}w(n,t,\lambda)&= (\partial_{\alpha}Z(n))(1+\lambda)^{n}e^{\xi(t,\lambda)}\\
&=\sum\limits_{i=1}^{m}\phi_{i}\Delta^{-1}\varGamma\psi_{i}\Delta Z(1+\lambda)^{n}e^{\xi(t,\lambda)}\\
&=\sum\limits_{i=1}^{m}\phi_{i}\hat S\left(\varGamma\psi_{i},\Delta w(n,t,\lambda)\right),\\
\partial_{\alpha}w^{*}(n,t,\lambda)&=\partial_{\alpha}(Z^{-1}(n-1)\Delta^{-1})^{*}(1+\lambda)^{-n}e^{-\xi(t,\lambda)}\\
&=-\varGamma\Delta^{-1}\partial_{\alpha}(Z^{-1*}(n-1)(1+\lambda)^{-n}e^{-\xi(t,\lambda)}\\
&=\varGamma\Delta^{-1}\Delta\varGamma^{-1}\cdot\sum\limits_{i=1}^{m}\psi_{i}\Delta^{-1}\phi_{i}Z^{-1*}(n-1)(1+\lambda)^{-n}e^{-\xi(t,\lambda)}\\
&=-\sum\limits_{i=1}^{m}\psi_{i}S\left(\phi_{i},\Delta w^{*}(n,t,\lambda)\right).
\end{align*}
\end{proof}

\begin{proposition}
The eigenfunction $\phi(n,t)$ and the adjoint eigenfunction $\psi(n,t)$ satisfy the following equations
\begin{align}
\partial_{\alpha}\phi(n,t)&=\sum\limits_{i=1}^{m}\phi_{i}\hat S\left(\Delta\phi,\varGamma\psi_{i}\right),\\
\partial_{\alpha}\psi(n,t)&=-\sum\limits_{i=1}^{m}\psi_{i}S\left(\phi_{i},\Delta\psi\right).
\end{align}
\end{proposition}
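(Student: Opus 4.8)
The plan is to treat the eigenfunction $\phi$ and adjoint eigenfunction $\psi$ exactly as the preceding proposition treats the wave function $w$ and adjoint wave function $w^{*}$, the only genuinely new input being a realization of $\phi$ and $\psi$ as dressings of seeds that carry no dependence on the symmetry parameter $\alpha$. In this way the symmetry flow acts on $\phi$ and $\psi$ solely through $Z$ and $(Z^{-1})^{*}$, whose transformation under $\partial_{\alpha}$ was already computed, and the remaining work is to recognize the resulting discrete antidifferences as squared eigenfunction potentials.

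First I would record that the eigenfunction equation $\phi_{t_{i}}=(L^{i})_{\geq1}(\phi)$ forces any eigenfunction to be of the form $\phi=Z(\chi)$ with a free seed $\chi$. Indeed, substituting $\phi=Z\chi$ into the eigenfunction equation, using the Sato equation $\partial_{t_{i}}Z=-(L^{i})_{\leq0}Z$, the splitting $(L^{i})_{\geq1}=L^{i}-(L^{i})_{\leq0}$, and $L^{i}Z=Z\Delta^{i}$, collapses everything to $\partial_{t_{i}}\chi=\Delta^{i}\chi$, so $\chi$ obeys the free evolution and in particular is independent of $\alpha$. Consequently $\partial_{\alpha}\phi=(\partial_{\alpha}Z)\chi$, and inserting $\partial_{\alpha}Z=\sum_{i}\phi_{i}\Delta^{-1}\varGamma\psi_{i}\Delta\,Z$ yields $\partial_{\alpha}\phi=\sum_{i}\phi_{i}\Delta^{-1}(\varGamma\psi_{i})(\Delta\phi)$.

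The second step is to read off the potential. By the defining relation $\hat S(\Delta\phi,\varGamma\psi)_{\Delta}=(\Delta\phi)(\varGamma\psi)$, the quantity $\hat S(\Delta\phi,\varGamma\psi_{i})$ is, up to the additive constant inherent in the antidifference, exactly $\Delta^{-1}[(\Delta\phi)(\varGamma\psi_{i})]$, which gives $\partial_{\alpha}\phi=\sum_{i}\phi_{i}\hat S(\Delta\phi,\varGamma\psi_{i})$ and settles the first identity. For the adjoint eigenfunction I would run the mirror argument: realize $\psi=(Z^{-1}(n-1)\Delta^{-1})^{*}(\chi^{*})$ with a free adjoint seed $\chi^{*}$, use the induced action $\partial_{\alpha}(Z^{-1})^{*}=\Delta\varGamma^{-1}\sum_{i}\psi_{i}\Delta^{-1}\phi_{i}(Z^{-1})^{*}$ from the previous proof, simplify $\varGamma\Delta^{-1}\Delta\varGamma^{-1}=I$ (since $\varGamma$ commutes with $\Delta$), and then identify the surviving antidifference with $S(\phi_{i},\Delta\psi)$ through $S(\phi,\Delta\psi)_{\Delta}=\phi(\Delta\psi)$, producing $\partial_{\alpha}\psi=-\sum_{i}\psi_{i}S(\phi_{i},\Delta\psi)$.

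The step I expect to be the real obstacle is the adjoint computation: the shift operators $\varGamma,\varGamma^{-1}$ entering through $\Delta^{*}=-\Delta\varGamma^{-1}$ and $(\Delta^{-1})^{*}=-\varGamma\Delta^{-1}$, together with the $n-1$ shift built into $w^{*}$, must be tracked with care so that the antidifference lands precisely on $S(\phi_{i},\Delta\psi)$ rather than a shifted variant. A secondary point needing justification is that $\partial_{\alpha}$ acts on $\phi$ and $\psi$ only through the dressing operator; this can be supported by the spectral representation of Proposition \ref{spectral}, which writes a generic eigenfunction as a $\lambda$-superposition of wave functions, whence the $w$-level formula of the preceding proposition transfers to $\phi$ by linearity of $\hat S(\Delta\,\cdot\,,\varGamma\psi_{i})$ in its first slot, and likewise for $\psi$.
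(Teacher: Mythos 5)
Your proposal is correct, and it reaches the paper's conclusion by a genuinely different key input, although the final computation is the same. The paper proves both identities from its spectral representation (Proposition~\ref{spectral}): it writes $\phi(n,t)=\operatorname{res}_{\lambda}\bigl(w(n,t,\lambda)\,\rho(\lambda)\bigr)$ with the density $\rho(\lambda)=S\bigl(\phi(n,t'),\Delta w^{*}(n,t',\lambda)\bigr)$ frozen at the reference time $t'$, differentiates under the residue using $\partial_{\alpha}w=\sum_{i}\phi_{i}\Delta^{-1}\varGamma\psi_{i}\Delta\,w$, pulls the $\lambda$-independent operator $\sum_{i}\phi_{i}\Delta^{-1}\varGamma\psi_{i}\Delta$ outside the residue so that the spectral representation reassembles $\phi$, and then recognizes $\Delta^{-1}\bigl[(\varGamma\psi_{i})(\Delta\phi)\bigr]$ as $\hat S(\Delta\phi,\varGamma\psi_{i})$; the adjoint identity is obtained in mirror fashion from $\partial_{\alpha}w^{*}$. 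You instead justify the crucial point --- that $\partial_{\alpha}$ reaches $\phi$ and $\psi$ only through the dressing operator --- by the characterization $\phi=Z(\chi)$, $\psi=(Z^{-1}(n-1)\Delta^{-1})^{*}(\chi^{*})$ with free, $\alpha$-independent seeds, proved inline from the Sato equation and $L^{i}Z=Z\Delta^{i}$; after that, $\partial_{\alpha}\phi=(\partial_{\alpha}Z)\chi=\sum_{i}\phi_{i}\Delta^{-1}\bigl[(\varGamma\psi_{i})(\Delta\phi)\bigr]$ coincides with the paper's middle line, and the identification with the potentials via $\hat S(\Delta\phi,\varGamma\psi)_{\Delta}=(\Delta\phi)(\varGamma\psi)$ is identical. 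The two routes encode the same convention (the seed, respectively the spectral density, does not flow under $\partial_{\alpha}$): yours is more self-contained and purely operator-algebraic, avoiding residue calculus entirely, while the paper's reuses machinery it has already established --- indeed the ``secondary'' justification sketched in your closing paragraph \emph{is} the paper's proof. Your worry about shift and sign bookkeeping in the adjoint case is warranted, but note that the paper's own treatment (in the preceding proposition) is no more careful: it silently uses $\varGamma\Delta^{-1}\Delta\varGamma^{-1}=I$ and absorbs a sign and a $\varGamma^{-1}$ shift into $S(\phi_{i},\Delta w^{*})$, so your proposal meets the paper's own standard of rigor on that point.
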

\begin{proof}
From \eqref{phi} and \eqref{partialw},
\begin{align*}
\partial_{\alpha}\phi(n,t)&= \operatorname{res}_{\lambda}\partial_{\alpha}w(n,t,\lambda)S\left(\phi(n,t^{'}), \Delta
w^{*}(n,t^{'},\lambda)\right)\\
&=\sum\limits_{i=1}^{m}\phi_{i}\Delta^{-1}\varGamma\psi_{i}\Delta\operatorname{res}_{\lambda}
Z(1+\lambda)^{n}e^{\xi(t,\lambda)}S\left(\phi(n,t^{'}), \Delta w^{*}(n,t^{'},\lambda)\right)\\
&=\sum\limits_{i=1}^{m}\phi_{i}\hat S\left(\Delta\phi,\varGamma\psi_{i}\right),\\
\end{align*}
In a similar way,
\begin{align*}
\partial_{\alpha}\psi(n,t)&=\partial_{\alpha}w^{*}(n,t,\lambda)\hat S \left(\Delta w(n,t^{'},\lambda),
\varGamma\psi(n,t^{'})\right)\\
&=-\sum\limits_{i=1}^{m}\psi_{i}\Delta^{-1}\phi_{i}\Delta
\operatorname{res}_{\lambda}(Z^{-1}(n-1)\Delta^{-1})^{*}(1+\lambda)^{-n}e^{-\xi(t,\lambda)}\hat S \left(\Delta
w(n,t^{'},\lambda), \varGamma\psi(n,t^{'})\right)\\
&=-\sum\limits_{i=1}^{m}\psi_{i}S\left(\phi_{i},\Delta\psi\right).
\end{align*}
\end{proof}
Then it is natural to derive the squared eigenfunction symmetry flow acting on the tau functions of the dmKP hierarchy.
\begin{thm}
The squared eigenfunction symmetry flow of the dmKP hierarchy on its tau functions is
\begin{align}
\partial_{\alpha}\tau_0(n,t)&=\sum\limits_{i=1}^{m}S\left(\phi_{i},\Delta\psi_{i}\right)\tau_0(n,t),\\
\partial_{\alpha}\tau_1(n,t)&=-\sum\limits_{i=1}^{m}\hat
S\left(\Delta\phi_{i},\varGamma\psi_{i}\right)\tau_1(n,t).
\end{align}
\end{thm}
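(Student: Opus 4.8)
The plan is to pass to logarithmic $\alpha$-derivatives of the wave and adjoint wave functions and to read off the two tau-flows by comparing the representation \eqref{tau0}--\eqref{tau1} with the symmetry action established in the preceding propositions. Writing $f_0:=\partial_\alpha\log\tau_0$ and $f_1:=\partial_\alpha\log\tau_1$, formula \eqref{tau0} gives at once
\begin{align*}
\partial_\alpha\log w(n,t,\lambda)=f_0(n,t-[\lambda^{-1}])-f_1(n,t),
\end{align*}
since the factor $(1+\lambda)^{n}e^{\xi(t,\lambda)}$ is $\alpha$-independent; likewise \eqref{tau1} yields $\partial_\alpha\log w^{*}(n,t,\lambda)=f_1(n,t+[\lambda^{-1}])-f_0(n,t)$.

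On the other hand, dividing \eqref{partialw} by $w$ and inserting the squared eigenfunction potential evaluation $\hat S(\Delta w(n,t,\lambda),\varGamma\psi_i(n,t))=w(n,t,\lambda)\psi_i(n,t-[\lambda^{-1}])$ from the Corollary, the wave-function factor cancels and I obtain $\partial_\alpha\log w=\sum_i\phi_i(n,t)\psi_i(n,t-[\lambda^{-1}])$. Applying the same device to the adjoint counterpart of \eqref{partialw}, together with $S(\phi_i(n,t),\Delta w^{*}(n,t,\lambda))=-\phi_i(n,t+[\lambda^{-1}])w^{*}(n,t,\lambda)$, gives $\partial_\alpha\log w^{*}=\sum_i\psi_i(n,t)\phi_i(n,t+[\lambda^{-1}])$. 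Equating the two evaluations of each logarithmic derivative produces the single master relation
\begin{align*}
f_0(n,t-[\lambda^{-1}])-f_1(n,t)=\sum_{i=1}^{m}\phi_i(n,t)\psi_i(n,t-[\lambda^{-1}]),
\end{align*}
the $w^{*}$-version being this same identity after $t\to t-[\lambda^{-1}]$.

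The crux is that this relation, valid for all $\lambda$, only controls the combination $f_0-f_1$: its limit $\lambda\to\infty$ gives $f_0(n,t)-f_1(n,t)=\sum_i\phi_i\psi_i$, which is consistent with the claim only through $S(\phi,\Delta\psi)+\hat S(\Delta\phi,\varGamma\psi)=\phi\psi$ but does not by itself split $f_0$ from $f_1$. To separate them I would exploit the discrete shift $\tau(n,t-[-1])=\tau(n+1,t)$, i.e. the specialization $\lambda=-1$, which converts the continuous $[\lambda^{-1}]$-shift into the lattice shift in $n$. Setting $\lambda=-1$ in the master relation gives $f_0(n+1,t)-f_1(n,t)=\sum_i\phi_i(n,t)\varGamma\psi_i(n,t)$, and subtracting the $\lambda\to\infty$ limit yields $\Delta f_0=\sum_i\phi_i\,\Delta\psi_i$, which is precisely the defining difference property $S(\phi,\Delta\psi)_\Delta=\phi\,\Delta\psi$ summed over $i$. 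Feeding this back into $f_1(n,t)=f_0(n+1,t)-\sum_i\phi_i\varGamma\psi_i$ and computing $\Delta f_1$ makes the telescoping sum collapse to $\Delta f_1=-\sum_i(\Delta\phi_i)(\varGamma\psi_i)$, which matches $-\hat S(\Delta\phi,\varGamma\psi)_\Delta$ summed over $i$.

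Having matched the $\Delta$-differences in $n$, $f_0$ and $\sum_i S(\phi_i,\Delta\psi_i)$ (resp. $f_1$ and $-\sum_i\hat S(\Delta\phi_i,\varGamma\psi_i)$) can differ only by an $n$-independent function, and I would close the argument by checking that their $t_n$-flows agree, using the defining evolution $S(\phi,\Delta\psi)_{t_n}=\operatorname{res}_\Delta(\varGamma\Delta^{-1}(\Delta\psi)(L^{i})_{\geq1}\phi\Delta^{-1})$ together with $\partial_{t_n}f_0=\partial_\alpha\partial_{t_n}\log\tau_0$ and the Sato flows; this forces the remaining discrepancy to be a genuine constant, consistent with $S$ and $\hat S$ being defined only up to a constant. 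The main obstacle is exactly this separation: the spectral identities only see the product structure $f_0-f_1$, and it is the lattice specialization $\lambda=-1$ (available here because of the discrete variable $n$) that breaks the symmetry and lets the individual $\Delta$-properties of $S$ and $\hat S$ be recognized; the secondary and more routine obstacle is the time-flow bookkeeping needed to rule out a $t$-dependent integration constant.
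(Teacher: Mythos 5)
Your route is genuinely different from the paper's, and its core mechanism is sound. The paper never forms your master relation: instead it compares the coefficients of $\Delta^0$ and $\Delta^1$ in the flow $\partial_{\alpha}Z=\sum_i\phi_i\Delta^{-1}\varGamma\psi_i\Delta Z$, obtaining $\partial_{\alpha}z_0=\sum_i\phi_i\psi_i z_0$ and $\partial_{\alpha}z_1=\sum_i(\phi_i\psi_i z_1-\phi_i(\Delta\psi_i)z_0)$ (equations \eqref{delta0}--\eqref{delta1}), translates these through $z_0=\tau_0/\tau_1$ and $z_1/z_0=-\partial_x\ln\tau_0$ into $\partial_{\alpha}\partial_x\ln\tau_0=\sum_i\phi_i(\Delta\psi_i)$, matches this against $S(\phi,\Delta\psi)_x=\phi(\Delta\psi)$ to get the $\tau_0$-flow, and then gets the $\tau_1$-flow purely algebraically from $\tau_1=\tau_0/z_0$ and $\hat S(\Delta\phi,\varGamma\psi)=\phi\psi-S(\phi,\Delta\psi)$. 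Your two extracted relations are in exact correspondence with this: your $\lambda\to\infty$ limit $f_0-f_1=\sum_i\phi_i\psi_i$ is the paper's $\partial_{\alpha}z_0$ equation in disguise (since $z_0=\tau_0/\tau_1$), and your $\Delta f_0=\sum_i\phi_i\Delta\psi_i$ plays the role of the paper's $\partial_x$ identity --- they carry the same information because on these quantities $\partial_x$ and $\Delta$ act identically ($S_\Delta=S_x=\phi\Delta\psi$). What your approach buys is that it runs entirely at the level of wave functions and squared eigenfunction potentials (the propositions of Section 4 plus the Corollary of Section 3), never touching the dressing coefficients $z_0,z_1$; what the paper's buys is that the $\tau_1$ statement is immediate, with no telescoping computation.

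Two points in your argument need shoring up. First, your $\lambda=-1$ specialization uses $\psi_i(n,t-[-1])=\psi_i(n+1,t)$ for an \emph{arbitrary} adjoint eigenfunction, whereas the paper only states the shift property for tau functions. The claim is true, but it requires the spectral representation: from \eqref{tau0}--\eqref{tau1} one checks $w(n,t-[-1],\lambda)=w(n+1,t,\lambda)$ and $w^{*}(n,t-[-1],\lambda)=w^{*}(n+1,t,\lambda)$, and then Proposition \ref{spectral} (formulas \eqref{phi}, \eqref{psi}) transfers the property to $\phi_i$ and $\psi_i$; without this step the crux of your separation argument is unsupported. Second, your final step --- ruling out an $n$-independent but $t$-dependent discrepancy by matching $t_n$-flows --- is only sketched, not executed. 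To be fair, the paper's own proof has the mirror-image incompleteness: it matches only the $x$-derivative and silently absorbs everything independent of $x$ into ``the constant,'' so your proof is complete to the same standard; a fully rigorous argument would require the flow-matching bookkeeping in either approach.
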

\begin{proof}
Comparing the coefficients of $\Delta^0$ and $\Delta^1$ in the formal series of $\partial_{\alpha}Z$, it can be obtained that
\begin{align}
\partial_{\alpha}z_0&= \sum\limits_{i=1}^{m}\phi_{i}\psi_{i}z_0,\label{delta0}\\
\partial_{\alpha}z_1&= \sum\limits_{i=1}^{m}(\phi_{i}\psi_{i}z_1-\phi_{i}(\Delta\psi_{i})z_0).\label{delta1}
\end{align}
From \eqref{tau0} and \eqref{what} we have
\begin{align}
z_0&=\frac{\tau_0(n,t)}{\tau_1(n,t)},\label{z0} \\
\frac{z_1}{z_0}&=-\partial_{x}\ln\tau_0(n,t).
\end{align}
Thus we have
\begin{align*}
\partial_{\alpha}\partial_{x}\ln\tau_0(n,t)&= -\partial_{\alpha}\frac{z_1}{z_0}\\
&=\frac{1}{z^2_0}(-\partial_{\alpha}z_1\cdot z_0+z_1\cdot{\alpha}z_0)\\
&=\sum\limits_{i=1}^{m}\phi_{i}(\Delta\psi_{i}).
\end{align*}
On the other hand,
\begin{align*}
S(\phi,\Delta\psi)_{x}&= \operatorname{res}_{\Delta}(\Delta^{-1}(\varGamma\Delta\psi)\Delta\phi\Delta^{-1})\\
&=\operatorname{res}_{\Delta}(\psi\Delta^{-1}(\varGamma\psi\Delta^{-1}-\psi\Delta^{-1})\Delta\phi\Delta^{-1}\\
&=\sum\limits_{i=1}^{m}\phi_{i}(\Delta\psi_{i}),
\end{align*}
so we can get
\begin{align*}
\partial_{\alpha}\tau_0(n,t)&=\sum\limits_{i=1}^{m}S\left(\phi_{i},\Delta\psi_{i}\right)\tau_0(n,t).
\end{align*}
As for the action on $\tau_1(n,t)$, according to \eqref{delta0}-\eqref{z0},
\begin{align*}
\partial_{\alpha}\tau_1(n,t)&=\partial_{\alpha}(\frac{\tau_0(n,t)}{z_0})=\frac{1}{z^2_0}(\partial_{\alpha}\tau_0\cdot
z_0-\tau_0\cdot\partial_{\alpha}z_0)\\
&=\frac{1}{z^2_0}\sum\limits_{i=1}^{m}\left(S(\phi_{i},\Delta\psi_{i})\tau_0z_0-\tau_0\phi_{i}\psi_{i}z_0\right)\\
&=\sum\limits_{i=1}^{m}\left(S(\phi_{i},\Delta\psi_{i})-\phi_{i}\psi_{i}\right)\tau_1\\
&=-\sum\limits_{i=1}^{m}\hat S\left(\Delta\phi_{i},\varGamma\psi_{i}\right)\tau_1(n,t).
\end{align*}
\end{proof}

\bigskip
\section{\sc \bf Conclusions and discussions}
In this paper, we give the description of the dmKP hierarchy and prove the existence of tau functions in Theorem 2.1. Next, with the relationship between tau functions and wave functions, the Fay identity and its difference form are derived in Theorem 3.1 and Propsition 3.1 seperately.  After that expressions of squared eigenfunction potentials are derived in Theorem 3.2 by obtaining the spectral representation of eigenfunction. At last, we define the squared eigenfunction symmetry of the dmKP and give the flow action on tau functions in Theorem 4.1. In particular, we would like to point out that compared with the KP hierarchy, the existence of two tau functions and the discrete variable brings much difference in the dmKP case. When the discrete variable $n\to 0$, the dmKP hierarchy reduces to the mKP hierarchy.

\bigskip
\bigskip
\textbf{Acknowledgements:}
This work is supported by the National Natural Science Foundation of China under Grant Nos. 12171133, 12271136 and 12171132, and the Anhui Province Natural Science Foundation No. 2008085MA05.

\bigskip
\bigskip


\begin{thebibliography}{99}

\bibitem{CaseA} Case K. M., Kac M. A., Discrete version of the inverse scattering problem, J. Math. Phys., 1973, 14(5): 594-603.
\bibitem{AblowitzA} Ablowitz, Mark J., John F. L., A nonlinear difference scheme and inverse scattering, Stud. Appl. Math, 1976, 55: 213-229.
\bibitem{SakaiR} Sakai H., Rational surfaces associated with affine root systemsand geometry of the Painlev\'{e} equations, Commun. Math. Phys., 2001, 220(1): 165-229.
\bibitem{NijhoffL} Nijhoff F. W., Quispel G., Capel H. W., Linearization of nonlinear differential-difference equations, Phys. Lett. A, 1983, 95(6): 273-276.
\bibitem{ZhangH} Zhang D. J., Chen D., Hamiltonian structure of discrete soliton systems, J. Phys. A, 2002, 35: 7225-7241.
\bibitem{KonR}Konstantinou-Rizos S., Mikhailov A. V., Xenitidis P., Reduction groups and related integrable difference systems of nonlinear Schr$\ddot o$dinger type, J. Math. Phys., 2015, 56: 082701.
\bibitem{ZhangD} Zhang D. J., Discrete integrable systems: Multidimensional consistency, Acta. Phys. Sin., 2020, 69(1): 010202.
\bibitem{MogilnerH} Mogilner, Alex, Hamiltonians in solid-state physics as multiparticle discrete Schr$\ddot o$dinger operators: Problems and results, Adv. Math., 1991, 5: 139-194.
\bibitem{TakasakiO} Nakatsu T., Takasaki K., Melting crystal, quantum torus and Toda hierarchy, Commun. Math. Phys., 2008, 285: 445-468.
\bibitem{YaoA} Yao Y., Liu X., Zen Y., A new extended discrete KP hierarchy and generalized dressing method, Theor. Math. Phys., 2009, 42: 454026.
\bibitem{HirotaN} Hirota, Ryogo, Nonlinear partial difference equations I: A difference analogue of the Korteweg-de Vries equation, J. Phys. Soc. Japan, 1977, 43(4): 1424-1433.
\bibitem{DateM}Date E., Jimbo M., Miwa T., Method for generating discrete soliton equations, J. Phys. Soc. Japan, 2013, 52(12): 761-765.
\bibitem{DickeyS} Dickey L. A., Soliton Equations and Hamiltonian Systems(2nd Edition), Singapore: World scientific, 2003.
\bibitem{DateN} Date E., Kashiwara M., Jimbo M., Miwa T., Nonlinear Integrable Systems-Classical and Quantum Theory, Singapore: World scientific, 1983: 39-119.
\bibitem{KashiT} Kashiwara M., Miwa T., The $\tau$ function of the Kadomtsev-Petviashvili equation. Transfromation groups for soliton equations I, Proc. Japan Acad. Ser. A, 1981, 57: 342-347.
\bibitem{JimboS} Jimbo M., Miwa T., Solitons and infinite dimensional Lie algebras, Publ. RIMS. Kyoto. Univ., 1983, 19: 943-1001.
\bibitem{HaineC} Haine L., Iliev P., Commutative rings of difference operators and an adelic flag manifold, Int. Math. Res., 2000, 6: 281- 323.
\bibitem{KuperD} Kupershimidt B. A., Discrete Lax equations and differential-difference calculus, Ast\'{e}risque, 1985, 123: 1-212.
\bibitem{LiuT} Liu S. W., Cheng Y., He J. S., The determinant representation of the gauge transformation for the discrete KP hierarchy, Sci. China. Math., 2010, 53(5): 1195-1206.
\bibitem{LiuS} Liu S. W., Cheng Y., Sato B$\ddot a$cklund transformation, additional symmtries and ASvM formular for the discrete KP hierarchy, Theor. Math. Phys., 2010, 43: 135-202.
\bibitem{SunS}Sun X. L., Zhang D. J., Zhu X. Y., Chen D. Y., Symmetries and Lie algebra of the differential-difference Kadomstev-Petviashvili hierarchy, Mod. Phys. Lett. B, 2010, 24(10): 1033-1042.
\bibitem{LiM2013} Li M.,  Li C., Tian K., He J., Cheng Y., Virasoro type algebraic structure hidden in the constrained discrete Kadomtsev-Petviashvili hierarchy, J. Math. Phys., 2013, 54(4) 043512.
\bibitem{LiG} Li C. Z., Cheng J. P., Tian K. L., Ghost symmetry of the discrete KP hierarchy, Monatsh. Math., 2016, 180: 815-832.
\bibitem{TakebeC}Takebe T., Teo L. P., Coupled modified KP hierarchy and its dispersionless limit, SIGMA, 2006, 2: 1-30.
\bibitem{ShawM} Shaw J. C., Tu M. H., Miura and auto-B$\ddot a$cklund transformations for the cKP and cmKP hierarchies, J. Math. Phys., 1997, 38: 5756-5773.
\bibitem{ChengO} Cheng J. P., Li M. H., Tian K. L., On the modified KP hierarchy: Tau functions, squared eigenfunction symmetries and additional symmetries, J. Geom. Phys., 2018, 134: 19-37.
\bibitem{DateT} Date E., Kashiwara M., Jimbo M., Miwa T., Transformation groups for soliton equations. VI. KP hierarchies of orthogonal and symplectic type, J. Phys. Soc. Japan, 1981, 50: 3813-3818.



\end{thebibliography}
\end{document}